\DeclareFontFamily{OT1}{pzc}{}
\DeclareFontShape{OT1}{pzc}{m}{it}{<-> s * [1.10] pzcmi7t}{}
\DeclareMathAlphabet{\mathpzc}{OT1}{pzc}{m}{it}
\def\({\left(}
\def\){\right)}
\newcommand{\CC}{\Gamma}
\newcommand{\pa}{\partial}
\newcommand{\bG}{{\bf G}}
\newcommand{\ep}{\epsilon}
\renewcommand{\cH}{\mathcal{H}}
\newcommand{\eeq}{\end{equation}}
\newcommand{\ea}{\end{array}}
\def\eea{\end{eqnarray}}
\def\<{\langle}
\def\>{\rangle}
\def\bZ{\mathbb{Z}}
\def\cL{\mathcal{L}}
\def\cM{\mathcal{M}}
\def\cN{\mathcal{N}}
\def\cA{\mathcal{A}}
\def\cT{\mathcal{T}}
\newtheoremstyle{dotless}{}{}{\itshape}{}{\bfseries}{}{ }{}
\theoremstyle{dotless}
\newtheorem{thm}{Theorem}
\newtheorem{ax}{Axiom}
\newtheorem{lemma}{Lemma}
\newtheorem{conj}{Conjecture}
\newtheorem{cor}{Corollary}
\theoremstyle{definition}
\newcommand{\neutralize}[1]{\expandafter\let\csname c@#1\endcsname\count@}
\newenvironment{thmbis}[1]
  {%
   \neutralize{ax}\phantomsection
   \begin{ax}}
  {\end{ax}}
\newcommand{\cI}{\mathcal{I}}
\def\[#1\]{%
  \begin{equation}#1\end{equation}%
}
\definecolor{shadecolor}{rgb}{0.9,0.9,0.95}
\definecolor{refkey}{rgb}{0.5,0.5,0}
\definecolor{labelkey}{rgb}{0.5,0.5,0}
\definecolor{citekey}{rgb}{0.5,0.5,0}
\definecolor{darkgreen}{rgb}{0,0.5,0}
\definecolor{darkblue}{cmyk}{0.9,0.9,0,0}
\definecolor{darkred}{rgb}{0.6,0,0.3}
\begin{document}
	
	\preprint{}

	\institution{CMSA}{Center of Mathematical Sciences and Applications, Harvard University, Cambridge, MA 02138, USA}
	\institution{HU}{Jefferson Physical Laboratory, Harvard University,
		Cambridge, MA 02138, USA}

	\title{
    \huge  Anomalous Symmetries End at the Boundary
	}

	\authors{Ryan Thorngren\worksat{\CMSA}    and Yifan Wang\worksat{\CMSA,\HU}}

	\abstract{ 
	A global symmetry of a quantum field theory is said to have an 't Hooft anomaly if it cannot be promoted to a local symmetry of a gauged theory. In this paper, we show that the anomaly is also an obstruction to defining symmetric boundary conditions. This   applies to Lorentz symmetries with gravitational anomalies as well. For theories with perturbative anomalies, we demonstrate the obstruction by analyzing the Wess-Zumino consistency conditions and current Ward identities in the presence of a boundary. We then recast the problem in terms of symmetry defects and find the same conclusions for anomalies of discrete and orientation-reversing global symmetries, up to the conjecture that global gravitational anomalies, which may not be associated with any diffeomorphism symmetry, also forbid the existence of boundary conditions. This conjecture holds for known gravitational anomalies in $D \le 3$ which allows us to conclude the obstruction result for $D \le 4$.}

	\date{}

	\maketitle
	
	\tableofcontents
	
	\section{Introduction}
	
	Given a quantum system with a global symmetry $G$, the 't Hooft anomaly (henceforth simply ``anomaly") is an invariant which represents the obstruction to promoting $G$ to a local symmetry, or equivalently coupling $G$ to background gauge fields. Anomalies are important for quantum field theory because they are preserved under renormalization group (RG) flows of $G$-symmetric theories \cite{tHoof}. This provides us with a litmus test to see if two $G$-symmetric fixed points are connected by a $G$-symmetric flow.
	
	One canonical argument for anomaly-matching employs anomaly in-flow, the observation that for many known anomalies, there is a $G$-symmetric invertible phase in one higher dimension,\footnote{A ($G$-symmetric) invertible phase is a theory $T$ with an inverse $\cT^{-1}$ such that the ``stack" $\cT \otimes \cT^{-1}$ (with the diagonal $G$ action) is equivalent to a trivial theory (with trivial $G$ action). We note that for some gravitational anomalies, e.g. of chiral theories with multiple conformal blocks, there is no invertible bulk phase which makes the bulk-boundary theory invariant. Instead one needs a theory with anyons to represent the different conformal blocks, and such a theory is not invertible. 
	} for which the anomalous theory defines a symmetric boundary condition, such that the combined bulk-boundary system can be coupled consistently to a background gauge field \cite{CallanHarvey}. The anomaly can thus be identified with this invertible phase, also known as the \textit{anomaly field theory}. Anomaly-matching then follows because the RG flows of interest are boundary RG flows for the combined system, and cannot affect the bulk fixed point (which is in fact topological here). This picture is extremely useful also because a classification exists for $G$-symmetric invertible phases, in terms of the cobordism invariants of the spacetime manifold \cite{kapustin2014symmetry,KTTW,freed2019reflection}.
	
	This picture of an anomalous theory as a boundary apparently does not work if we want to consider boundary conditions of the anomalous theory itself, since that would be a ``boundary of a boundary". A similar issue arises in lattice systems, where anomalous symmetries cannot be realized by tensor product operators, and instead must be realized by quantum circuits or evolution by a local Hamiltonian, meaning some arbitrary choices must be made to even define the global symmetry action itself in the presence of a boundary \cite{Wen_2013}. These issues have lead to a kind of folklore in the subject that anomalous symmetries are problematic at a boundary.
	
	In this paper, we prove with mild assumptions (although without invoking anomaly in-flow) that at any boundary of a quantum field theory (QFT), all anomalous symmetries must be broken (either explicitly or spontaneously) at least to a subgroup which is anomaly-free.
	
    For the well-known perturbative (a.k.a local) anomalies such as the chiral anomaly in 1+1D, the reason is intuitively clear: a purely left-moving current cannot be conserved at a boundary because charge cannot flow through the boundary. We formalize this argument and extend it to all dimensions  in Section \ref{sec:conformal}, building upon previous results in \cite{Jensen:2017eof}. By analyzing the Wess-Zumino consistency conditions and the anomaly-descent procedure, we show that the existence of a symmetric boundary requires the corresponding Schwinger term in the descent equations to trivialize, which in turn demands the anomaly polynomial for the relevant symmetries to take a factorized form depending on central $U(1)$ factors of the symmetry group. Furthermore, by analyzing the current Ward identities in the presence of a symmetric boundary in the conformal limit, we prove an obstruction theorem which states such anomalies must vanish for unitary theories.
        This argument also applies to systems with local gravitational anomalies, which shows they cannot have any boundary conditions without breaking the boundary Lorentz symmetry. We also comment on symmetric boundaries for non-unitary theories which circumvent our obstruction theorem. 
    
    We then consider more general global anomalies including those which cannot be diagnosed by the divergence of currents and anomalies of discrete symmetries with no Noether current at all. For these cases we adopt the picture of a global symmetry as a collection of special defects possessing group algebra fusion rules and obeying a list of axioms \cite{Gaiotto:2014kfa}, which we review in Section \ref{sec:defect}. This formulation can be viewed as a generalization of Noether's theorem. The Wess-Zumino consistency conditions follow immediately from these axioms. When these defects are topological (i.e. when there are no gauge-gravity anomalies\footnote{In this paper all gauge fields including the metric are non-dynamical backgrounds for global and spacetime symmetries. As is customary in the literature, we still refer to the relevant 't Hooft anomalies as gauge and gravity anomalies. There are also mixed anomalies that depend nontrivially on both the metric and the background gauge fields, which we refer to as gauge-gravity anomalies.}), these conditions imply that the anomaly is described by a  group cohomology class of $G$. In Section \ref{sec:groupcoh} beginning from a symmetric boundary condition we show that the corresponding cohomology class restricted to the subgroup preserved at the boundary is exact.
    
    More general anomalies (such as gauge-gravity anomalies) must be treated with care, which requires relaxing our conditions on the symmetry defects to allow mild metric and tangent structure dependence, which we describe in Section \ref{sec:gganom}. It is conjectured that the solution to the Wess-Zumino consistency conditions for the anomaly describes a class in a certain cobordism cohomology of the symmetry group $G$. For our arguments it is only necessary that it describes a class in some generalized cohomology theory. We have tried to achieve a balance between physical intuition and mathematical precision in the description of these consistency conditions. We also comment on  phenomena such as group cohomology anomalies becoming trivial in this more general classification.
    
    In Section \ref{sec:genanom} we extend our arguments to boundaries in this general framework, showing once again that no symmetric boundary condition exists. This argument relies on the conjecture that systems with gravitational anomalies cannot have boundaries. We have proven this for perturbative anomalies of Lorentz invariant theories in Section~\ref{sec:conformal}, and in Section \ref{sec:genanom} we are able to show it for enough global anomalies that we can conclude the main result for spacetime dimensions $D \le 7$ for fermions and $D \le 4$ for bosons. To extend this result will require a better understanding of the general gravitational anomalies in higher dimensions.
    
    In Section \ref{sec:ainf} we comment on anomaly in-flow, and prove that all symmetries, as we have defined them, satisfy anomaly in-flow.
       Finally in Section \ref{sec:discussion} we comment on extending our results to higher form symmetries, domain walls, and consequences for emergent anomalies and gauge theories in 3+1D.
    
    \section{Perturbative Anomalies and Boundaries}
    \label{sec:conformal}
   Here we discuss the interplay between perturbative anomalies of a Lorentz invariant QFT $\cT$ with global symmetry $G$ in even spacetime dimensions $D=2n$ and its possible Lorentz invariant boundary conditions $\cB$. We will show that a continuous global symmetry subgroup $G_\cB \subset G$ can be preserved at the boundary only if it has a trivial 't Hooft anomaly (including gauge-gravity anomalies whose anomaly polynomial involves the Riemann curvature as well as gauge curvatures). Moreover, $\cT$ will not admit any such boundary conditions if there is a perturbative pure gravitational anomaly.

\subsection{Review of Perturbative Anomalies}
Let us first briefly review the perturbative anomalies of QFTs in the absence of boundaries.\footnote{See \cite{Bertlmann:1996xk} for a comprehensive review including the mathematical background.} We denote collectively the continuous global symmetry $G$ and Lorentz symmetry of the theory $\cT$ by
\ie
\cG=G\times SO(2n)\,.
\fe
Upon coupling the theory to background $\cG$ gauge fields $B$, the perturbative 't Hooft anomalies manifest through the anomalous variation of the partition function,
\ie
\cA(v,B) \equiv -i\D_v \log Z[B]\,,
\label{aV}
\fe
under a gauge transformation parametrized by $v$ with $\D_v B=dv+[B,v]$. The anomalous variation satisfies the Wess-Zumino (WZ) consistency condition \cite{Wess:1971yu}
\ie
{}\D_{v_1} \cA(v_2,B)-\D_{v_2} \cA(v_1,B)=\cA([v_1,v_2],B)\,.
\label{WZ}
\fe
This equation ensures that the infinitesimal gauge transformations integrate to an action of the group of gauge transformations. Solutions to this equation are given by the Stora-Zumino descent procedure \cite{Zumino:1983ew,AlvarezGaume:1984dr,Manes:1985df}. 

We follow \cite{AlvarezGaume:1984dr} here. Let $\theta^\alpha$ be a set of parameters parametrizing a family of gauge transformations $g(x,\theta)$ with $g(x,0) = 1$. We define the corresponding family of transformed background gauge fields
\ie
\bar B(x,\theta)=g^{-1} (B+d)g,
\label{hB}
\fe
which satisfies $\bar B(x,0)= B(x)$. We define the exterior derivative in the parameter directions
\ie
\hat d \equiv d\theta^\A  {\pa \over \pa \theta^\A}\,.
\label{edt}
\fe
The infinitesimal gauge parameters are given by the Maurer–Cartan one-form of $\cG$,
\ie
\hat v \equiv v_\A d\theta^\A  = g^{-1}  \hat d  g\,,
\label{MC1}
\fe
which satisfies
\ie
\hat d \hat v =-\hat v\wedge \hat v\,.
\fe
From \eqref{hB} and \eqref{MC1} we see $\hat d$ acts on $\bar B$ and its curvature $ F(\bar B)$ as a gauge transformation with parameter $v_\A$
\ie
\hat d \bar B =-D_{\bar B} \hat v
,\quad 
\hat d  F (\bar B)=[ F(\bar B), \hat v]\,.
\fe
The Wess-Zumino consistency condition \eqref{WZ} follows from
\ie
 \hat d \cA(\hat v, \bar B) =0\,.
 \label{WZg}
\fe
Indeed if we choose $g(x,\theta)$ such that $\left.v_\A(x,\theta)\right |_{\theta=0}$ coincide with the gauge parameters in   \eqref{WZ} for $\A=1,2$, equation  \eqref{WZ}  comes from the coefficient of $d\theta^1\wedge d\theta^2$ at $\theta=0$ from the above expression.

Locality requires $\cA(v,B)$ to be written as an integral of a density $\int_\cM Q(v,B)$ on the spacetime manifold $\cM$, and likewise $\cA(\hat v, \bar B) = \int_\cM Q(\hat v, \bar B)$. The equation \eqref{WZg} is then equivalent to $\hat d Q$ being a total derivative on $\cM$. Solutions of this form come in a sequence of differential forms satisfying the descent equations
\ie
\hat d Q_{2n+1}^{(0)}+dQ_{2n}^{(1)}=&0\,,
\\
\hat d  Q_{2n}^{(1)}+dQ_{2n-1}^{(2)}=&0\,,
\\
\vdots~~~&
\\
\hat d Q_{1}^{(2n)}+dQ_{0}^{(2n+1)}=&0\,,
\\
\hat d  Q_0^{(2n+1)}=&0\,,
\label{SZd}
\fe
where $Q_m^{(k)}$ is a degree $m$ polynomial in $\bar B$ (which has degree 1) and its field strength $F(\bar  B)$ (which has degree 2), and has degree $k$ in the gauge parameter $\hat v$. At the top of the descent equations is $Q_{2n+1}^{(0)}$, a Chern-Simons-type term which represents the action of a $D+1$-dimensional bulk theory on the boundary of which $\cT$ is gauge invariant. It is associated with a degree $2n+2$ anomaly polynomial $\cI_{2n+2}[\cT] = d Q_{2n+1}^{(0)}$, a polynomial in the background gauge field strength $F(\bar B)$ (which includes the Riemann curvature 2-form $R$). Note that the terms $Q_{2n+1-k}^{(k)}$ in \eqref{SZd} are subjected to ambiguities of the form 
\ie
Q_{2n+1}^{(0)} \to Q_{2n+1}^{(0)} + d\A_{2n}^{(0)},~ Q_{2n}^{(1)} \to Q_{2n}^{(1)} + \hat d \A_{2n}^{(0)}+ d \A_{2n-1}^{(1)},
\dots 
\label{damb}
\fe
but the solutions are physically equivalent \cite{Zumino:1984ws}.

The solution to \eqref{WZg} is readily obtained from the descent equations \eqref{SZd} as
\ie
\cA(\hat v,B)= \int_\cM \left. Q^{(1)}_{2n}\right|_{\theta=0}\,,
\label{WZs}
\fe
where $\cM$ is the closed spacetime manifold.
The rest of the terms in the descent equations \eqref{SZd} also have physical origins. In particular $Q^{(2)}_{2n-1}$ is responsible for the modification of the equal-time commutation relation of the conserved currents acting on the Hilbert space of $\cT$ in the presence of background gauge fields\cite{Zumino:1984ws,Faddeev:1984jp,Faddeev:1985iz,Faddeev:1986pc}. Quantizing the theory on a time-slice $\cS$ of $\cM$, we define the (smeared) Gauss-law operator for $\cG$ as usual 
\ie
{\bf G}(v)=\int_\cS (j^i_0(  \sigma) +X^i( \sigma)) v^i( \sigma)
\fe
where $\sigma^a$ are the coordinates on $\cS$,  $j_0^i$ is the time-component of the corresponding Noether current where $i$ is the adjoint index for $\cG$ and $X^i\equiv -(D_a)^i{}_j {\delta \over \delta B_a^j} $ generates space-dependent gauge transformations of the background gauge field $B$. The commutator of the Gauss-law operators can differ from that of the Lie algebra of $\cG$,\footnote{The classical gauge transformation generators $X^i(\sigma^a)$ obey the undeformed commutation relation.}
\ie
{}[\bG(v_\A),\bG(v_\B)]=\bG([v_\A,v_\B])+\int_\cS  S(v_\A,v_\B,B)
\label{etc}
\fe
where the correction term is known as the (integrated) Schwinger term, which captures the contact term in the equal-time commutator of the conserved currents and equivalently the 
projective representation of the symmetry transformations on the Hilbert space \cite{Zumino:1984ws}.\footnote{The Schwinger (contact) term for $D=1+1$ is independent of background gauge fields as evident from the central extensions of the current algebras. In higher dimensions, this is no longer the case.} 

Similar to how the anomalous variation $\cA$ is constrained by the WZ consistency condition, the Schwinger term is constrained by the Jacobi identity. Up to $\hat d$-exact c-number ambiguities due to redefinitions of the Gauss-law operators by terms involving the background  gauge field, the solution is determined by the anomaly and given by the following term in the descent equations \cite{Zumino:1984ws}
\ie
S(v_\A,v_\B,B)=\left. Q_{2n-1}^{(2)}(\hat v, B) \right |_{d\theta^\A d\theta^\B}\,.
\label{SWt}
\fe
This will be relevant to us later when we include a boundary for the spacetime manifold.

Perturbative anomalies also manifest in the modification of current conservation laws by contact terms. For example, an anomalous symmetry current $J^\m$ in $D=2n$ spacetime dimensions is characterized by the following modification of the current Ward identity,
\ie
\la \pa_\m J^\m(x) J^{\m_1} (x_1)\dots J^{\m_{n}} (x_{n})\ra=-{k\over (n+1)!(2\pi)^n}
\ep^{\m_1 \dots \m_n \n_1 \dots \n_n}
\prod_{i=1}^n {\pa \over \pa x_i^{\n_i} }\D^d(x-x_i).
\label{aJw}
\fe
Note that the RHS is constrained to take the form above so that the anomaly is $U(1)$ invariant, a consequence of the Wess-Zumino consistency condition. When coupled to a background $U(1)$ gauge field $A$, it leads to an anomalous variation of the partition function, under a gauge transformation $A\to A+d\lambda$,
\ie
\D_\lambda \log Z[A]={k i \over (n+1)!(2\pi)^{n+1}}\int_\cM  \lambda F^n\,.
\label{aJt}
\fe
Equivalently, the $U(1)$ anomaly is characterized by a degree $2n+2$ anomaly polynomial,
\ie
\cI_{d+2}={k\over (n+1)!(2\pi)^{n+1}} F^{n+1}
\label{aJp}
\fe
which reproduces the anomaly \eqref{aJt} through the descent equations \eqref{SZd},
\ie
\cI_{2n+2}=d Q^{(0)}_{2n+1},\quad \D_\lambda Q^{(0)}_{2n+1}= dQ^{(1)}_{2n},\quad \D_\lambda \log Z[A]=i\int_{\cM} Q^{(1)}_{2n}\,.
\fe
with $Q^{(1)}_{2n}={k\over (n+1)!(2\pi)^{n+1}} \lambda F^n$ from \eqref{aJp}.
Here $Q^{(0)}_{2n+1}$ is the Chern-Simons $2n+1$-form that realizes the anomaly-inflow from a gapped auxiliary bulk theory in $2n+1$-dimensions to the physical theory on $\cM$.

More generally, through the descent equations \eqref{SZd}, the anomaly polynomial $\cI_{d+2}[\cT]$ determines the anomalous variations under background gauge transformations and local Lorentz rotations parametrized by $\lambda$ and $\theta$ respectively,
\ie
\D_{\lambda,\theta}  \log Z[A,e]=i \D_{\lambda,\theta} \int_{W} Q_{2n+1}^{(0)}(A,\omega ) = i\int_{\cM} Q^{(1)}_{2n}(\lambda,\theta, F,R) \,,
\label{lorvar}
\fe
which solves the Wess-Zumino consistency conditions \cite{Wess:1971yu}. Here $W$ is an auxiliary $2n+1$ dimensional manifold with boundary $\pa W=\cM$. 
In the above we use $e$ to denote the vielbein and $\omega$ is the spin-connection, which transform  under the Lorentz rotation as,
\ie
\D_\theta e_\m^a = -\theta^a{}_b e_\m^b,\quad \D_\theta \omega_\m ^{ab} = \nabla_\m \theta^{ab}  \,.
\fe
As is well known, the Bardeen-Zumino counter-term \cite{Bardeen:1984pm} allows one to shift between Lorentz and diffeomorphism anomalies. In the above, we have implicitly assumed the scheme where the diffeomorphism anomaly vanishes. If instead, we insist on a symmetric stress-tensor in correlation functions including at coincident points, the Lorentz anomaly gets replaced by a diffeomorphism anomaly. Then instead of \eqref{lorvar},  under reparametrization $\D x^\m =\xi^\m(x)$ we have,
\ie
\D_{\lambda,\xi}  \log Z[A,g]=i \D_{\lambda,\xi} \int_{W} \tilde Q_{2n+1}^{(0)}(A,\Gamma) = i\int_{\cM} \tilde Q^{(1)}_{2n}(\lambda,\xi, F,\Gamma) \,,
\label{diffvar}
\fe
where $\CC^\m_{\n\rho}$ is the Christoffel connection and $\tilde Q_{2n+1}$ differs from $Q_{2n+1}$ by an exact $2n+1$-form.

\subsection{Symmetric Boundaries and Vanishing Schwinger Terms}
\label{sec:vanst}

Let us now place the theory $\cT$ with anomaly polynomial $\cI_{2n+2}[\cT]$ on half space $\mR^{2n}_+$  with coordinates $x^\m=(\sigma^a,x^\perp)$
and a putative Lorentz invariant boundary condition $\cB$ at $x^\perp=0$. The coupled system is commonly referred to as a boundary field theory which we denote by $\cB[\cT]$. The symmetry preserved includes the Lorentz subgroup $SO(2n-1)$ acting on the boundary directions and a subgroup of the bulk global symmetry $G_\cB \subset G$, which we denote collectively by $\cG_\cB$. Below we will deduce constraints on the bulk anomaly polynomial $\cI_{d+2}[\cT]$ from the existence of such a boundary condition $\cB$.

We first observe that in the presence of a boundary $\Sigma\equiv \pa \cM$, \eqref{WZs} is not $\hat d$-closed in general and thus the WZ consistency condition is no longer satisfied. Instead one finds using \eqref{SZd}
\ie
\hat d \cA(\hat v,B)=\hat d \int_\cM Q_{2n}^{(1)}(\hat v,B)=-\int_\Sigma Q_{2n-1}^{(2)}(\hat v,B)\,,
\fe
where the background gauge field $B$ is restricted here to the symmetry subgroup $\cG_\cB$ preserved by the boundary, similarly $\theta^\A$ are restricted to be coordinates on $\cG_\cB$ (see around \eqref{edt}).
To fix the WZ consistency condition with a boundary, we need to modify the anomalous variation by boundary contributions \cite{Jensen:2017eof}
\ie
\cA_\cB (\hat v,B)=\cA(\hat v,B) + \int_\Sigma V(\hat v,B)\,,
\fe
such that
\ie
\hat d \cA_\cB (\hat v,B)=0 \,.
\fe
This is only possible if
\ie
\int_\Sigma \hat d V (\hat v,B)=\int_\Sigma Q_{2n-1}^{(2)}(\hat v,B)\,.
\label{Scond}
\fe
By regarding the orthogonal direction to the boundary as the Euclidean time, and taking $\cS=\Sigma$ to be the spatial slice, we see \eqref{Scond} requires the Schwinger term \eqref{SWt} on $\Sigma$ to be trivial, and thus can be set to zero after a redefinition of the charge densities. This is indeed natural in the following sense. The boundary condition $\cB$ corresponds to a particular state $|\cB\ra $ in the Hilbert space on $\Sigma$ (in the presence of background gauge fields). The fact that $\cB$ respects the $\cG_\cB$ symmetry translates to the following condition for the corresponding Gauss-law operator (after a c-number redefinition if necessary) that implements gauge transformations on the boundary parametrized by a Lie algebra valued function $v(\sigma^a)$,
\ie
\bG (v)|\cB\ra =0\,.
\fe
Consistency with the algebra of $\bG(v)$ in \eqref{etc} then demands the Schwinger term to vanish.\footnote{This also means that the symmetry transformations of $\cG_B$ on the Hilbert space on $\Sigma$ cannot be projective.}

The triviality of the Schwinger term in the descent equations places strong constraints on the anomaly $\cI_{2n+2}[\cT]$. Suppose there is an anomaly of the form 
\ie
\cI_{2n+2}[\cT]=P(F(B)^{n+1})\,.
\fe
Here $P(X_1,X_2,\dots X_{n+1})$ denotes a symmetric invariant polynomial of degree $2n+2$ in the Lie algebra valued variables $X_i$ (of degree 2). If some of the $X_i$ are equal, e.g. $X_1=X_2=\dots=X_m=X$, we write compactly
\ie
P(X^m,X_{m+1},\dots X_{n+1})\,.
\fe
The Schwinger term is determined by (up to coboundaries)
\ie
Q_{2n-1}^{(2)}
=
\begin{cases}
P(\hat v, d\hat  v) & n=1
\\
{n(n-1)(n+1)\over 2}\int_0^1 dt (1-t)^2 P((d\hat v )^2,\bar B,F_t(\bar B)^{n-2}) & n \geq 2
\end{cases}
\fe
with $ F_t(\bar B)\equiv t d\bar  B+t^2 \bar B\wedge \bar  B$ as given in \cite{Zumino:1985vr}. To be compatible with \eqref{Scond}, we must have
\ie
\hat d\int_\Sigma Q_{2n-1}^{(2)}(\hat v,B)=0\,,
\fe
since $\hat d^2=0$. A quick inspection reveals that this is not possible unless $(d\hat v)^2=0$ which requires the relevant gauge parameters to be abelian \cite{Jensen:2017eof}. Therefore, pure non-abelian anomalies are not compatible with the WZ consistency conditions. This lead us to the following theorem, which was already argued for in \cite{Jensen:2017eof} and we have re-derived here.
\begin{thm}
A $2n$-dimensional QFT $\cT$ may admit a symmetric boundary condition $\cB$ only if its  anomaly polynomial is a sum of monomials with the factorized form
\ie
\cI_{2n+2}[\cB[\cT]]= \sum_I F^I_{U(1)}\wedge H^I_{2n}(F,R)
\label{APres}
\fe
when restricted to the symmetry subgroup $\cG_\cB\subset \cG$ preserved by the boundary.
\label{thm:WZ}
\end{thm}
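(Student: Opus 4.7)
The plan is to convert the trivialization condition \eqref{Scond} derived above into a direct constraint on the invariant polynomial $P$ associated with the bulk anomaly $\cI_{2n+2}[\cT]$. Since the requirement $\hat d \int_\Sigma Q_{2n-1}^{(2)}(\hat v,B)=0$ is linear in $P$, I would test it monomial by monomial, expanding $P$ in a basis adapted to the decomposition of $\cG_\cB$ into abelian factors, simple non-abelian factors, and the Lorentz factor $SO(2n-1)$, with each monomial a Chern--Weil product of traces of the form $\Tr(F_i^{k_i})\,\Tr(R^\ell)$.

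First I would substitute each such monomial into the explicit formula for $Q_{2n-1}^{(2)}$ displayed above and compute $\hat d Q_{2n-1}^{(2)}$ using $\hat d\hat v=-\hat v\wedge \hat v$, $\hat d \bar B = -D_{\bar B}\hat v$, and $\hat d F(\bar B)=[F(\bar B),\hat v]$. For $n=1$ this reduces to analyzing $\hat d P(\hat v,d\hat v) = -P(\hat v\wedge \hat v, d\hat v)$, which after integration on $\Sigma$ vanishes only when the symmetric invariant $P$ degenerates on commutators, i.e. when the surviving parameters lie in an abelian subalgebra. For $n \geq 2$, the factor $(d\hat v)^2 = (\hat v\wedge \hat v)^2$ appearing in the Zumino formula is an irreducible non-abelian tensor; tracking its $\hat d$-variation against the remaining $P$-slots fails to produce a total derivative on $\Sigma$ unless one of those slots can be replaced by a curvature that trivializes through its gauge variation, which happens precisely for a $U(1)$ curvature $F_{U(1)} = dA_{U(1)}$ (or for the Riemann curvature $R$ only when it is paired with an abelian generator, since pure gravitational monomials are treated by the same argument applied to the Lorentz subgroup).

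Second, I would verify the converse: a monomial of the form $F_{U(1)}\wedge H_{2n}(F,R)$ descends to a Schwinger term that is manifestly $\hat d$-exact on $\Sigma$, because the descent applied to the $U(1)$ factor alone produces $\hat v_{U(1)}$ at level two, which is central and satisfies $d\hat v_{U(1)}=0$; this allows a local boundary counterterm of the schematic form $V = A_{U(1)} \wedge H^{(1)}_{2n-1}(F,R)$ to be written explicitly, canceling the obstruction. Combining the two directions monomial by monomial yields exactly the factorized form \eqref{APres}.

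The main technical obstacle will be making precise the negative half of the argument: ruling out accidental cancellations among non-factorized monomials or with exotic counterterms $V(\hat v,B)$. This requires promoting the naive pointwise failure of $\hat d Q_{2n-1}^{(2)}$ to be exact into an honest cohomological obstruction in the bi-complex generated by $(\hat d, d)$ on $\Sigma$, modulo the descent ambiguities \eqref{damb}. For this I would invoke the uniqueness result of \cite{Zumino:1984ws} for the Schwinger term up to $\hat d$-exact c-number redefinitions of the Gauss-law operator, so that the vanishing of $[\int_\Sigma Q_{2n-1}^{(2)}]$ in this cohomology is equivalent to the factorization of $P$ through a central $U(1)$ slot, completing the proof.
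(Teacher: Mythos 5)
Your proposal is correct and follows essentially the same route as the paper: both derive the necessary condition $\hat d\int_\Sigma Q^{(2)}_{2n-1}(\hat v,B)=0$ from $\hat d^2=0$ applied to \eqref{Scond}, inspect the explicit Zumino form of the Schwinger term to conclude that the non-abelian structure must degenerate so that only central $U(1)$ factors survive, and observe that the factorized monomials indeed give $Q^{(2)}_{2n-1}=0$ up to the descent ambiguities \eqref{damb}. One minor slip that does not affect the logic: in the Zumino formula $d\hat v$ is the spacetime exterior derivative of $\hat v$, so $(d\hat v)^2\neq(\hat v\wedge\hat v)^2$ (the latter would be $(\hat d\hat v)^2$).
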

In the above $F^I_{U(1)}$ is the field strength of an abelian factor in the center $U(1)_I\subset \cZ(\cG_\cB)$ and $H_{2n}^I$ is a symmetric invariant polynomial of degree $2n$ in the background curvatures (here $F$ may include $F^I_{U(1)}$). The Schwinger term simply vanishes in this case (up to the ambiguities in the descent equations \eqref{damb} as usual),
\ie
Q^{(0)}_{2n+1}= A^I_{U(1)} H_{2n}^I(F,R),\quad 
Q^{(1)}_{2n}= \hat v^I_{U(1)} H_{2n}^I(F,R),\quad 
Q^{(2)}_{2n-1}=0\,,
\fe
where $\hat v^I_{U(1)}$ contains the gauge transformation parameter for the $U(1)_I$ symmetry.

In particular if $\cT$ has a pure gravitational anomaly which is possible for $n\in 2\mZ+1$ \cite{AlvarezGaume:1983ig,AlvarezGaume:1984dr}, it cannot have a Lorentz invariant boundary condition if $D>2$.
\begin{cor} For spacetime dimension $D=2n>2$, if the theory $\cT$ has a pure gravitational anomaly, it cannot admit a Lorentz invariant boundary condition $\cB$ preserving the $SO(2n-1)$ subgroup.\footnote{We don't lose information of the bulk pure gravitational anomalies upon reduction of the structure group from $SO(2n)$ to $SO(2n-1)$ since the relevant Pontryagin classes $p_1,p_2,\dots,p_{n+1\over 2}$ remain independent as long as $n\geq 3$. A similar reasoning applies to gauge-gravity anomalies for $n\geq 2$.}
\label{cor:pertg}
\end{cor}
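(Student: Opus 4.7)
The plan is to deduce Corollary~\ref{cor:pertg} as an essentially immediate consequence of Theorem~\ref{thm:WZ}: a pure gravitational anomaly polynomial is built from the Riemann curvature $R$ alone, while the factorization constraint \eqref{APres} demands an explicit $U(1)$ gauge field strength factor that simply is not available. I would proceed by contradiction. Assume $\cT$ has a nonvanishing pure gravitational anomaly $\cI_{2n+2}[\cT]=P(R)$ and admits a boundary condition $\cB$ preserving $\cG_\cB=G_\cB\times SO(2n-1)$. Theorem~\ref{thm:WZ} then forces $P(R)$, when restricted to $\cG_\cB$, to take the form $\sum_I F^I_{U(1)}\wedge H^I_{2n}(F,R)$ with each $U(1)_I\subset \cZ(\cG_\cB)$.

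The decisive step is that this factorization cannot match a nonzero polynomial in $R$ alone. Every term on the right of \eqref{APres} carries an explicit wedge factor $F^I_{U(1)}$, which is a background gauge curvature of a central $U(1)_I$. The only way the $R$-only left-hand side could absorb such a factor is if some $U(1)_I$ sat inside the Lorentz group itself; but for $n\ge 2$ the boundary Lorentz group $SO(2n-1)$ has trivial continuous center, so every central $U(1)_I$ must sit inside $\cZ(G_\cB)$ and its field strength is genuinely independent of $R$. Expanding both sides in linearly independent monomials in $R$ and in the $G_\cB$-gauge curvatures then forces $P(R)\equiv 0$, contradicting the hypothesis that the pure gravitational anomaly is nontrivial.

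The one nontrivial point is to verify that the pure gravitational anomaly polynomial, originally written using bulk $SO(2n)$ invariants, does not trivialize when restricted to the boundary $SO(2n-1)$ structure group. For pure gravitational anomalies (which exist only in $D=4k+2$, so $n=2k+1\ge 3$), the polynomial $P(R)$ is of degree $2n+2$ in the Pontryagin forms $p_1,\dots,p_{(n+1)/2}$. Since $H^{\bullet}(BSO(2n-1);\bR)$ is freely generated by $p_1,\dots,p_{n-1}$ and $(n+1)/2\le n-1$ for $n\ge 3$, these generators remain algebraically independent upon restriction, so a nonzero $P(R)$ on $BSO(2n)$ pulls back to a nonzero polynomial on $BSO(2n-1)$. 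With this cohomological stability in place, the contradiction proceeds as above and the corollary is established. I expect this restriction argument to be the main technical point to be careful about; the core logical step is a one-line application of Theorem~\ref{thm:WZ}.
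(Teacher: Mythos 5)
Your proposal is correct and follows essentially the same route as the paper: the corollary is deduced directly from Theorem~\ref{thm:WZ} by noting that a pure gravitational anomaly polynomial in $R$ alone cannot take the factorized form \eqref{APres}, together with the observation (made in the paper's footnote) that the Pontryagin classes $p_1,\dots,p_{(n+1)/2}$ remain independent under the restriction from $SO(2n)$ to $SO(2n-1)$ for $n\geq 3$. Your additional remark that $SO(2n-1)$ has trivial continuous center, so no $F^I_{U(1)}$ can arise from the Lorentz sector, is a correct and welcome elaboration of the same argument.
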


Note that the gravitational anomaly in $D=2$ (i.e. $n=1$) is not constrained by Theorem~\ref{thm:WZ} since the Lorentz group is completely broken by the boundary. Nevertheless it has been shown that such an anomaly is an obstruction to boundary conditions for 2d theories based on a CFT argument \cite{Jensen:2017eof}. In the next section we will extend this result to higher spacetime dimensions. 

Before we end this section let us comment on a caveat concerning unitarity and anomalies. Thus far we have not demanded the QFT of interest $\cT$ and its boundary condition $\cB$ to be unitary. However we have implicitly assumed that the perturbative anomalies of $\cT$ are all captured by the descent procedure. It is known that more \textit{exotic} perturbative anomalies that solve the WZ consistency conditions are possible in non-unitary theories  \cite{Nakayama:2018dig,Chang:2020aww}. In the rest of the paper, we will take the theory $\cT$ and its boundary $\cB$ to be unitary unless explicitly stated otherwise. We comment on this point further in Section~\ref{sec:nuea} after proving Theorem \ref{thm:vanpa} below.

\subsection{Conformal Boundaries and Vanishing Anomalies}

By studying the Ward identities, we will further demonstrate that $\cI_{d+2}[[\cB[\cT]] = 0$ in this section, where we recall that $\cI_{d+2}[[\cB[\cT]]$ denotes the bulk anomaly polynomial $\cI_{d+2}[\cT]$ restricted to symmetries preserved by the boundary $\cB$.
Since the 't Hooft anomalies are RG invariants, it suffices to focus on the infra-red phase of the boundary field theory $\cB[\cT]$, which is expected to be described by a conformal field theory (CFT) with certain conformal boundary condition, also known as a boundary CFT (BCFT) \cite{Diehl:1981zz,Cardy:1984bb,Cardy:1991tv,McAvity:1993ue,McAvity:1995zd} (see \cite{Cardy:2004hm,Andrei:2018die} for recent reviews).

By assumption there is a global $U(1)$ symmetry preserved by the boundary $\cB[\cT]$. Its Noether current satisfies
\ie
  \pa_\m J^\m(x)   = 0\,,
  \label{bJc}
\fe
everywhere including at the boundary $x^\perp=0$ away from other operator insertions, similarly for the $SO(2n-1)$ Lorentz symmetry parallel to the boundary
\ie
\pa_\m T^{\m a} (x) = 0 \,. 
\label{bTc}
\fe
 Note that in general the conserved current in \eqref{bJc} is a linear combination of bulk and boundary operators,
\ie
J^\m (x)= J_{(0)}^\m(x)+  \sum_{m\geq 1}\D^{(m)}(x^\perp)  J_{(m)}^\m (\sigma)\,,
\label{Jexp}
\fe
where $J_{(m)}^\m$ is a boundary operator of scaling dimension $\Delta=2n-1-m$ which splits as $(J_{(m)}^\perp,J_{(m)}^a)$  into a scalar and a vector under the residual $SO(2n-1)$  Lorentz symmetry. Now unitarity bounds based on the boundary conformal algebra $SO(2n,1)$ implies\footnote{See \cite{Rychkov:2016iqz} for a review on CFT techniques and in particular the conformal unitarity bounds.}
\ie
\Delta (J_{(m)}^\perp) \geq {d-2\over 2},\quad \Delta (J_{(m)}^a) \geq {d-2}\,.
\fe
Thus we conclude the sum in \eqref{Jexp} truncates to
\ie
J^a (x)=& J_{(0)}^a(x)+  \D(x^\perp)  J_{(1)}^a (\sigma)\,,
\\
J^\perp (x)=& J_{(0)}^\perp(x)+   \sum_{m= 1}^{ \lfloor d/ 2 \rfloor}\D^{(m)}(x^\perp)  J_{(m)}^\perp(\sigma)\,,
\fe
where $J_{(1)}^a$ is a locally conserved current on the boundary. Let's consider the integrated Ward identity of the form
\ie
\lim_{\ep \to 0}\int_{-\ep}^\ep dx^\perp (x^\perp)^m \pa_\m J^\m =0  \,.
\fe
With $m\geq 1$, one finds $J_{(m)}^\perp=0$ as $x_\perp \to 0$. Next taking $m=0$, it gives
\ie
\lim_{x_\perp \to 0} J^\perp_{(0)} =-\D(x^\perp) \pa_a J_{(1)}^a =0\,.
\fe
Consequently we have
\ie
\lim_{x_\perp \to 0} J^\perp =0\,,
\label{Jv}
\fe
as an operator identity. Note that this is consistent with the vanishing Schwinger term when we treat the $x_\perp$ direction as the Euclidean time, as discussed in the last section.

Importantly the anomalous Ward identity \eqref{aJw} cannot be modified in the presence of a symmetric boundary. This is because such a modification is equivalent to a parity-odd gauge-invariant density on the boundary, which is not possible in odd dimensions. On the other hand, 
\eqref{Jv} implies
\ie
\lim_{x_n^\perp \to 0} \la J^\m(x) J^{\m_1}(x_1)\dots J^{\perp }(x_n)\ra =0\,,
\fe
and thus
\ie
\lim_{x_n^\perp \to 0} \la \pa_\m J^\m(x) J^{\m_1}(x_1)\dots J^{\perp }(x_n)\ra =0\,,
\fe
which implies $k=0$ in \eqref{aJw} and this rules out pure $U(1)$ anomalies in \eqref{APres}. From the argument leading to \eqref{Jv}, it is clear that this continues to hold for general conserved currents $J_\m^i$ that generate symmetry $G_\cB$. The corresponding anomalous Ward identity takes the form
\ie
\la \pa_\m J^\m(x) J_{i_1}^{\m_1} (x_1)\dots J_{i_n}^{\m_{n}} (x_{n})\ra=-{\kappa \over (2\pi)^n} K_{i_1\dots i_n}
\ep^{\m_1 \dots \m_n \n_1 \dots \n_n}
\prod_{i=1}^n {\pa \over \pa x_i^{\n_i} }\D^d(x-x_i)\,,
\label{}
\fe 
where $K_{i_1\dots i_n}$ is a $G_\cB$-invariant tensor. We conclude $\kappa=0$ by taking the $x_n^\perp \to 0$ limit and using $\lim_{x_n^\perp \to 0} J^\perp_{i_n}(x_n)=0$. Therefore all but mixed $U(1)$-gravitational anomalies in \eqref{APres} are forbidden.

The anomalous Ward identity for such a gauge-gravity anomaly takes the following form for $D=4$,
\ie
\la \pa_\A J^\A(x) T_{\m \n}  (y)  T_{\rho \sigma} (z)\ra=&-{k_g\over (2\pi)^2}  
\ep_{\m \rho}{}^{ \A \B} \pa^y_\A \pa^z_\B(\pa^y\cdot \pa^z \D_{\n \sigma} -\pa^y_\sigma \pa^z_\n)
 \D^4(x-y)\D^4(x-z)
 \\
 &+(\rho \leftrightarrow \sigma)+(\m \leftrightarrow \n)\,.
\label{JTTw}
\fe
 A parallel argument for \eqref{Jv} shows that \eqref{bTc} requires\footnote{The boundary limit of the other component 
 \ie
 \lim_{x_\perp \to 0} T^{\perp\perp} ={\rm D}(\sigma)
 \fe
 defines a boundary operator ${\rm D}(\sigma)$, known as the displacement operator which is nontrivial if the boundary is not topological.
 }
\ie
\lim_{x_\perp \to 0} T^{a\perp} =0\,.
\label{Tv}
\fe
Consequently by taking the limit and using
\ie
\lim_{z^\perp \to 0} \la J^\A(x) T_{\m \n}  (y)  T_{\rho \perp} (z)\ra =0 \,,
\fe
we deduce that the mixed $U(1)$-gravitational anomaly must vanish for $D=4$. A similar argument shows this continues to hold in higher dimensions. This concludes the argument for the following theorem.
\begin{thm}
A unitary QFT $\cT$ in dimension $D=2n$ can admit a unitary Lorentz invariant boundary condition $\cB$ that preserves a global symmetry subgroup $G_\cB$ only if the theory $\cT$ does not have perturbative gravitational and $G_\cB$ anomalies. In particular, the anomaly polynomial must trivialize 
\ie
\cI_{d+2}[\cB[\cT]] =0\,,
\label{pertv}
\fe
when restricted to bulk symmetries $\cG_\cB=SO(2n-1)\times G_\cB$ preserved by the boundary. 
\label{thm:vanpa}
\end{thm}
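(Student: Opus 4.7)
The plan is to leverage Theorem~\ref{thm:WZ}, which already constrains the anomaly polynomial preserved by $\cB$ to the factorized form \eqref{APres} involving central $U(1)$ factors. The remaining task is to show that each such factorized piece must actually vanish, and I would do this by analyzing boundary conformal Ward identities in the infrared, exploiting the fact that 't Hooft anomalies are RG invariants.

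First I would pass to the IR BCFT with boundary conformal group $SO(2n,1)$, then analyze the boundary operator expansion \eqref{Jexp} of a conserved bulk current $J^\mu$. The key input is the boundary unitarity bound on operator dimensions coming from the residual conformal algebra, which truncates the expansion so that only a finite set of boundary operators $J_{(m)}^\perp$ and a single conserved boundary current $J_{(1)}^a$ contribute. Integrating $\partial_\mu J^\mu = 0$ against $(x^\perp)^m$ across an infinitesimal slab around the boundary then gives a recursion forcing $J^\perp_{(m)}=0$ and $\partial_a J_{(1)}^a = 0$, so that the operator identity $\lim_{x^\perp \to 0} J^\perp = 0$ holds as in \eqref{Jv}.

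Next I would insert this boundary limit into the anomalous Ward identity \eqref{aJw}. Taking $x_n^\perp \to 0$ in the $(n+1)$-point function $\langle \partial_\mu J^\mu(x) J^{\mu_1}(x_1)\cdots J^\perp(x_n)\rangle$ kills the left-hand side. The crucial subtlety is that one must rule out a possible boundary modification of the Ward identity: any such modification would be a parity-odd gauge-invariant local density on the $(2n-1)$-dimensional boundary, which does not exist in odd dimensions. Hence the right-hand side must vanish identically, forcing the anomaly coefficient $k=0$ and eliminating the pure $U(1)$ anomaly in \eqref{APres}. The same reasoning, applied to non-abelian currents $J^\mu_i$ generating $G_\cB$ with invariant tensor $K_{i_1\cdots i_n}$, eliminates all pure $G_\cB$ anomalies.

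Finally, for the mixed $U(1)$-gravitational anomalies remaining in \eqref{APres}, I would apply the analogous argument to the stress tensor. Boundary conservation \eqref{bTc} together with the unitarity bounds on boundary operators appearing in the expansion of $T^{\mu a}$ gives the operator identity $\lim_{x^\perp \to 0} T^{a\perp} = 0$, analogous to \eqref{Tv}. Inserting this into the gauge-gravity Ward identity \eqref{JTTw} (and its higher-dimensional generalizations) with one stress tensor taken to the boundary forces the mixed anomaly coefficient $k_g$ to zero. Since Theorem~\ref{thm:WZ} has reduced the problem to precisely these residual factorized pieces, we conclude $\cI_{d+2}[\cB[\cT]] = 0$.

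The main obstacle I expect is the rigorous justification of commuting the boundary limit $x_n^\perp \to 0$ with the distributional structure on the right-hand side of the anomalous Ward identities, and ruling out subtler boundary contact terms in higher dimensions. The argument against parity-odd boundary densities is clean in $D=2n$ because the boundary is odd-dimensional, but for gauge-gravity anomalies one must also verify that no invariant boundary density built from the extrinsic curvature (which is odd under flipping the normal) can contaminate the identity; this should follow from the factorized form of \eqref{APres} and the assumed unitarity and Lorentz invariance of $\cB$.
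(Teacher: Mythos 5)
Your proposal follows essentially the same route as the paper's own argument: reduce to the factorized form of Theorem~\ref{thm:WZ}, pass to the IR BCFT, use the boundary unitarity bounds to truncate the boundary operator expansion and derive the operator identities $\lim_{x^\perp\to 0}J^\perp=0$ and $\lim_{x^\perp\to 0}T^{a\perp}=0$, rule out parity-odd boundary modifications of the anomalous Ward identities on the odd-dimensional boundary, and then take boundary limits in \eqref{aJw} and \eqref{JTTw} to force all anomaly coefficients to vanish. The steps, including the treatment of non-abelian currents via the invariant tensor $K_{i_1\cdots i_n}$ and of the mixed $U(1)$-gravitational piece via the stress tensor, match the paper's proof.
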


\subsection{Comments on Non-Unitary Theories and Exotic Anomalies}
\label{sec:nuea}

We emphasize that Theorem~\ref{thm:vanpa} does not apply to non-unitary theories, which may have more general anomalies as mentioned at the end of Section~\ref{sec:vanst}. A familiar counter-example from string theory is the non-unitary (non-chiral) $bc$ ghost CFT in $D=2$ described by the following action (see \cite{Polchinski:1998rq} for details),
\ie
S_{bc}={1\over 2\pi}\int d^2 z (b \pa_{\bar z} c+ \bar b \pa_{z}\bar c)\,,
\fe
on the complex plane with coordinates $(z,\bar z)$ and $z= \sigma +i x_\perp$. Here  $b,c$ are holomorphic anti-commuting fields (ghost) and $\bar b,\bar c$ are their anti-holomorphic partners. The $bc$ CFT is parametrized by a real number $\lambda$ which determines the  holomorphic and anti-holomorphic conformal weights $(h,\bar h)$ of the ghost fields,\footnote{When the $bc$ CFT is placed on a curved manifold, the $\lambda$ parameter appears in the action through the background charge coupling 
\ie
 {1-2\lambda\over 4\pi}\int d^2 z \sqrt{g} \phi R\,.
\fe
Here $\phi$ is a real scalar from the bosonization of the $bc$ ghosts via $J^{\rm gh}_z=\pa_z\phi,~J^{\rm gh}_{\bar z}=\pa_{\bar z}\phi$. 
}
\ie
h_b=\bar h_{\bar b}=\lambda,\quad h_c=\bar h_{\bar c}=1-\lambda\,,
\fe
and the conformal central charges
\ie
c_L=c_R=1-3(2\lambda-1)^2 \,.
\fe
The $bc$ CFT contains a ghost number current
\ie
(J^{\rm gh}_z,J^{\rm gh}_{\bar z})=(-bc,-\bar b\bar c)\,,
\fe
and its dual which generate vector and axial $U(1)$ ghost number symmetries. The $b,c$ ghosts have charges $\mp 1$ respectively with respect to $J^{\rm gh}_z$, while the $\bar b,\bar c$ ghosts have charges $\mp 1$ respectively with respect to $J^{\rm gh}_{\bar z}$. 
The vector ghost number symmetry has a gauge-gravity anomaly
\ie
\nabla^\m J^{\rm gh}_\m= {1-2\lambda \over 2} R\,,
\label{bca}
\fe
where $R$ is the Ricci curvature scalar. This is an \textit{exotic} anomaly that solves the WZ consistency condition but does not arise from the usual descent procedure \cite{Nakayama:2018dig,Chang:2020aww}.\footnote{On general grounds, they are not admissible in a unitary theory with a normalizable vacuum \cite{Chang:2020aww}. 
}
Yet the theory has a well defined boundary condition $\cB$ at ${\rm Im}(z)=0$ (i.e. $ x_\perp=0$) given by 
\ie
c(z)=\bar c(\bar z),\quad b(z)=\bar b(\bar z)\,,
\label{ghostbc}
\fe
that preserves the vector ghost number symmetry and is essential for formulating worldsheet string theory on Riemann surfaces with boundaries. 

At the special value $\lambda=1/2$, the $bc$ CFT is identical to a free Dirac fermion (and the theory becomes unitary) and $J^\m_{\rm gh}$ is nothing but the fermion number current. Indeed the anomaly \eqref{bca} vanishes in this case in accordance with Theorem~\ref{thm:vanpa}. 

One may wonder where the CFT arguments in the last section fails for the general $bc$ CFT with the symmetric boundary \eqref{ghostbc} preserving the anomalous ghost number symmetry, since the vanishing conditions \eqref{Jv} and \eqref{Tv} are obviously satisfied by the ghost number current and $bc$ stress-tensor in the boundary limit \ie
\lim_{x_\perp\to 0} J^\perp_{\rm gh}(x)=\lim_{x_\perp\to 0} T^{\sigma \perp}(x)=0\,,
\label{bcvancond}
\fe
away from other operator insertions. Recall earlier a tension between such vanishing conditions and the anomalous current Ward identity in the presence of a boundary was what led us to conclude that the anomaly must be zero. 
To this end, we emphasize an important feature of the anomaly \eqref{bca} (for $\lambda\neq {1\over 2}$) compared to the conventional 't Hooft anomalies is that it's even under spacetime parity. Consequently when the theory is placed on a manifold with boundary, there exists symmetric parity-even terms localized on the boundary that modifies \eqref{bca}, which is not possible in the parity-odd case.

More explicitly, the relevant current Ward identity (compared to \eqref{aJw} and \eqref{JTTw} for the parity-odd anomalies) in the absence of a boundary takes the following form
\ie
\la \pa_\m J_{\rm gh}^\m(x) T_{\A\B}(x')\ra = {1-2\lambda\over 2} (\pa_\A \pa_\B - \pa^2 \D_{\A\B})\D^2(x-x')\,.
\label{JTw}
\fe
With a symmetric boundary preserving the current $J^\m_{\rm gh}$ (e.g. as \eqref{ghostbc} in the $bc$ CFT), the Ward identity \eqref{JTw} admits the following modification localized on the boundary (which obeys the WZ consistency condition)
\ie
\la \pa_\m J_{\rm gh}^\m(x) T_{\A\B}(x')\ra_\cB \supset  {\kappa\over 2} (\D_{\A\B} 
\D_{\C \perp} - 2\D_{\C(\A}\D_{\B)\perp})\pa^\C_{(x')}\D^2(x-x')\D(x'_\perp)\,.
\label{JTbw}
\fe
It is equivalent to the the following local modification of \eqref{bca} upon coupling to background metric, 
\ie
\nabla_\m J^\m_{\rm gh}={1-2\lambda\over 2}R(x)-\kappa K(x) \D(x_\perp)\,,
\fe
where $K(x)$ is the extrinsic curvature. 

In the $bc$ CFT, the value of $\kappa$ is fixed
\ie
\kappa={2\lambda-1}\,.
\fe
For $\lambda=m+1\in \mZ$, this follows from counting zero modes of the $b,c$ ghosts with the boundary condition \eqref{ghostbc} (see e.g. \cite{Alvarez:1982zi}). For general $\lambda$, one should be able to derive this by analyzing the two-point function of $T_{\m\n}$ and $J^{\rm gh}_\m$ in the presence of the boundary, but we will not pursue it here. Consequently
\ie
\nabla_\m J^\m_{\rm gh} = {1-2\lambda\over 2}  (R(x) +2K(x)\D(x_\perp))\,,
\label{bcab}
\fe
and
\ie
\int_\cM \sqrt{g} \nabla_\m J^\m_{\rm gh} =2\pi (1-2\lambda) \chi(\cM) \,,
\fe
where $h$ denotes the induced metric on the boundary $\Sigma=\pa \cM$ and the above follows from the Gauss-Bonnet theorem.

Therefore for the $bc$ CFT, the full current Ward identity in the presence of the boundary gives
\ie
\lim_{x'_\perp \to 0}\la \pa_\m J_{\rm gh}^\m(x) T_{\sigma \perp}(x')\ra_\cB={1-2\lambda\over 2}
\pa^{(x')}_\perp \pa^{(x')}_\sigma (\D^2(x-x')\theta(x'_\perp))\,.
\label{JTbf}
\fe
Note that by a c-number redefinition of the ghost current
\ie
\tilde J_{\rm gh}^\m (x)=  J_{\rm gh}^\m(x) -{1-2\lambda\over 2}\D^\m_\perp K(x) \theta(x_\perp)\,,
\label{gcrd}
\fe
we can completely absorb the RHS of \eqref{JTbf}
\ie
\lim_{x'_\perp \to 0}\la \pa_\m \tilde J_{\rm gh}^\m(x) T_{\sigma \perp}(x')\ra_\cB=0\,,
\fe
which is indeed consistent with \eqref{bcvancond}. 

{In fact we have the stronger result
\ie
\la \pa_\m \tilde J_{\rm gh}^\m(x) T_{\n \rho}(x')\ra_\cB=0\,,
\label{JTbff}
\fe
on flat space after using the Gauss-Codazzi equation in $D=2$. Note that the anomaly \eqref{bca} and \eqref{bcab} implies that the symmetry defect for the $U(1)$ ghost number symmetry $\cL_\eta=e^{i \eta \oint  \star J_{\rm gh}}$ has an isotopy anomaly \cite{Chang:2018iay,Chang:2020aww}. The redefinition of the ghost number current \eqref{gcrd} and consequently the condition \eqref{JTbff} ensure that the modified symmetry defect $\tilde\cL_\eta=e^{i \eta \oint \star\tilde J_{\rm gh}}$ is topological in the presence of the boundary.
}

The parity-even exotic anomaly \eqref{bca} straightforwardly generalizes to higher dimensions \cite{Chang:2020aww},
\ie
d\star J= k e(R)\,,
\label{dJa}
\fe
where $e(R)$ denotes the Euler class in $D=2n$ dimensions and its integral over a closed manifold  $\cM$
\ie
\chi(\cM)=\int_\cM e(R)
\label{}
\fe
computes the Euler characteristic of $\cM$. Such anomalies naturally arise in supersymmetric QFTs upon (partial) topological twist \cite{Witten:1988ze}.\footnote{The $bc$ CFT is related to the Dirac fermion CFT (and supersymmetric cousins) by a similar twist that involves shifting the stress tensors by the (anti)holomorphic derivatives of the ghost number currents.} Perhaps the most well-studied examples come from the Donaldson-Witten twist of $\cN=2$ supersymmetric QFTs in $D=4$ \cite{Witten:1994cg,Witten:1995gf,Moore:1997pc,Losev:1997tp}. The physical theory before twisting has $U(1)_R\times SU(2)_R$ R-symmetry. Here $J$ corresponds to the $U(1)_R$ current of the supersymmetric theory, and $k$ is proportional to the mixed $U(1)_R$\,-\,$SU(2)_R$ anomaly. After twisting (which identifies the $SU(2)$ components of the spin connection with the $SU(2)_R$  background gauge field), $k$ corresponds to a mixed $U(1)_R$\,-gravity anomaly as we have also seen in the $bc$ CFT.\footnote{To be more precise, the full Ward identity for the $U(1)_R$ current in the Donaldson-Witten theories take the following form \cite{Shapere:2008zf}
\ie
d\star J= k e(R)+ k' p_1(T)\,,
\label{DWdJ}
\fe
where $k'$ is positive in physical CFTs (before twisting). This is because $k'$ is proportional to the coefficient of the unique conformal structure (conformal $c$-anomaly) in the stress-tensor two-point function which is reflection positive.
} If the $D=4$ theory is conformal, $k$ is proportional to a combination $2a-c$ of the conformal anomalies $a$ and $c$ \cite{Shapere:2008zf}.

In the case with boundary, similar to the $bc$ CFT, we can imagine a modified (integrated) current Ward identity of the following form,
\ie
d\star J= k (e(R)+\Phi\D(x_\perp)dx_\perp)\,,
\fe
where $\Phi$ is a $2n-1$-form that participates in the Gauss-Bonnet-Chern theorem for manifold $\cM$ with boundary $\Sigma$
 \cite{Chern,GBC}
 \ie
 \chi(\cM)=\int_\cM e(R) +\int_\Sigma \Phi  \,.
 \fe
As before, upon a redefinition of the current by
\ie
\tilde J =J-\star \Phi \theta(x_\perp)\,,
\fe
we have topological symmetry defects from
$e^{i\eta \oint \star \tilde J}$ with the putative symmetric boundary. 
It would be interesting to see if such symmetric boundary conditions arise in anomalous non-unitary QFTs (either from topological twist or not).\footnote{Note that the presence of a parity-odd anomaly in \eqref{DWdJ} suggests that such a symmetric boundary condition is likely not possible for the Donaldson-Witten theories.} 

Finally we emphasize that, in principle our Theorem~\ref{thm:vanpa} can fail in more dramatic ways for non-unitary theories (since unitarity was explicitly used in the proof). However our knowledge of non-unitary QFTs is rather limited and the familiar examples are often non-unitary only in mild ways.
It would be interesting to explore more systematically non-unitary QFTs including their anomalous symmetries and boundary conditions.\footnote{See \cite{Gorbenko:2018ncu,Gorbenko:2018dtm,Giombi:2019upv,Jepsen:2020czw} for recent studies of complex CFTs, which are non-unitary is a more dramatic fashion than the non-unitarity Virasoro minimal models and the $bc$ CFT in $D=2$.}

\subsection{Implications for Unitary CFTs}
\label{sec:perteg}
 Our general results in the previous sections give rise to strong constraints on possible boundary conditions of a given CFT $\cT$, without relying on any Lagrangian descriptions. Here we discuss a few examples for illustration.  
 
First of all, a theory $\cT$ with pure gravitational anomalies cannot admit Lorentz invariant boundary conditions (see Corollary~\ref{cor:pertg} and discussions therein). This includes familiar 2d CFTs with non-vanishing $c_L-c_R$ (e.g. chiral bosons and fermions),
 \ie
 \cI_4 [\cT]\supset {c_L-c_R\over 24} p_1(T) \,,
 \fe
 but also the mysterious strongly-coupled 6d $\cN=(2,0)$ superconformal field theory (SCFT) labelled by an ADE Lie algebra $\mf{g}$, whose anomaly polynomial takes the form \cite{AlvarezGaume:1983ig,Harvey:1998bx,Intriligator:2000eq,Yi:2001bz}
\ie
 \cI_8 [\cT]\supset  {1\over 4!} {r_{\mf{g}}\over 8} 
 (  p_1(T)^2-4p_2(T) )\,,
 \fe
 where $r_{\mf{g}}$ denotes the rank of $\mf{g}$.
 
 While there is no pure perturbative gravitational anomalies for $D=4$ theories, there can be mixed $U(1)$-gravitational anomalies. This happens for a large class of $D=4$ CFTs with $\cN=1$ supersymmetry (e.g. the super-QCD in the conformal window) whose anomaly polynomial takes the form
 \ie
 \cI_6 [\cT]\supset{1\over 3!} \left( {k_{R}} c_1(F_{U(1)_R}) p_1(T) + k_{RRR} c_1(F_{U(1)_R})^3
 \right)
 \fe
 where the $U(1)_R$ denotes the R-symmetry which is a part of the $\cN=1$ superconformal symmetry. From Theorem~\ref{thm:vanpa}, we deduce that an $\cN=1$ SCFT $\cT$ may have a $U(1)_R$ preserving boundary condition only if $k_R=k_{RRR}=0$, which is not possible unless the SCFT contains no local degrees of freedom.\footnote{Here we have used the relation between the conformal central charges and the 't Hooft anomalies \cite{Anselmi:1997am,Anselmi:1997ys} and the bounds on the conformal central charges in unitary CFTs \cite{Hofman:2008ar}.} The $\cN=4$ super-Yang-Mills (SYM) with gauge algebra $\mf{g}$ is a particular $\cN=1$ SCFT with enhanced superconformal symmetry and an anomaly polynomial free from gauge-gravity anomalies, 
 \ie
 \cI_6[\cT]= {d_\mf{g}\over 2}c_3(F_{SU(4)_R})\,,
 \label{n4anom}
 \fe
 where $SU(4)_R$ is the enhanced R-symmetry and $d_\mf{g}$ is the dimension of $\mf{g}$. The $\cN=4$ SYM is known to admit a large family of half-BPS boundary conditions \cite{Gaiotto:2008sa,Gaiotto:2008ak} that preserve half of the supersymmetries and the R-symmetry maximal subgroup $SU(2)_H\times SU(2)_C \subset SU(4)_R$. It is easy to check that the anomaly \eqref{n4anom} indeed trivializes in this subgroup, in agreement with our general results.\footnote{Moreover, inside another maximal subgroup  $SO(2)_R\times SU(2)_R \times SU(2)_F \subset SU(4)_R$, the $SU(2)_F$ subgroup has a global Witten anomaly \cite{Witten:1982fp} (see also \cite{Tachikawa:2018rgw} for such an anomaly in general strong coupled $D=4$ CFTs) which in modern perspective is captured by the nontrivial element of the bordism group
 $\Omega_5^{Spin}(BSU(2))=\mZ_2$. Looking ahead, as we will argue in Section~\ref{sec:genanom}, such an global anomaly also obstructs a symmetric boundary condition preserving this $SU(2)_F$. Indeed the $SU(2)_F$ symmetry is broken by the known boundary conditions \cite{Gaiotto:2008sa,Gaiotto:2008ak}. This anomaly occurs at height $k = 1$ in the notation of Section \ref{sec:genanom}.
 }

	\section{Symmetry Defects and Group Cohomology}
	   \label{sec:defect}
	  Now and in the remainder of the paper we will generalize to the case where the global symmetry $G$  is not necessarily continuous, and thus a Noether current may not be available. We will still need some notion of locality for the symmetry action, and so we will associate the symmetry generator with special defect operators and consider correlation functions defined in the presence of networks of these operators.
	In this section, we will suppose these defect operators are topological, which precludes so called gauge-gravity anomalies we will discuss in Section \ref{sec:gganom} below.

	\subsection{$G$-foams and Background Gauge Fields}
	\label{sec:gfax}
	
   We want to say that a theory $\cT$ has an unbroken $G$-symmetry if we can define correlation functions in the presence of a network of $G$-symmetry defects. The specific kind of network we want is what we will call a  \textbf{$G$-foam}, which consists of a collection of co-oriented embedded closed hypersurfaces labelled by elements of $G$, meeting transversely along their boundaries. This means that in any small open neighborhood, the foam is Poincar\'e dual to a triangulation. Any collection of embedded closed hypersurfaces can be infinitesimally perturbed to satisfy this property. Furthermore, we will require the labels to satisfy the following axiom.
    \begin{ax} \textbf{(Flatness)}
    At a codimension-two junction (where three hypersurfaces of the $G$-foam meet) the path-ordered product of the $G$ labels along an oriented loop linking the junction is the identity, where a hypersurface with label $g$ contributes $g$ to this product if its co-orientation agrees with the orientation of the loop and $g^{-1}$ otherwise.
    \label{ax:0}
    \end{ax}
    
    Correlation functions are allowed to depend on local operator insertions, the metric and tangent structure of spacetime, etc., but the dependence on the $G$-foam is constrained to obey the following axioms.

    \begin{ax}    Correlation functions are invariant under passing a local operator from one side of a $g$-hypersurface to the other while transforming the local operator by the group element $g$ or $g^{-1}$ if it goes with or against the co-orientation, respectively.
 \label{ax:1}
    \end{ax}
    \begin{ax} 
    Correlation functions are invariant under introducing or removing spherical components of the foam whose interior doesn't contain any foam or other operator insertions.
     \label{ax:2}
    \end{ax}
    \begin{ax} 
    Correlation functions are invariant under smooth isotopies of the foam. (This axiom is relaxed in cases with gauge-gravity anomaly in Section \ref{sec:gganom}.)
     \label{ax:3}
    \end{ax}
    \begin{ax} 
    We can perform a recombination of the $G$-foam, changing its local topology and only affecting the correlation function by a phase.
      \label{ax:4}
    \end{ax}
    
    We will consider two realizations of $G$-symmetries of the same theory $\cT$ (meaning two definitions of correlation functions satisfying the above axioms) equivalent if they are related by redefining these correlation functions by phase factors associated with the point-like singularities of the foam. We will see this ambiguity corresponds to adding a local gauge-invariant counterterm to the action.
    We call a $G$-symmetry \textit{anomalous} if it is not equivalent to one that assigns trivial phase factors to recombinations of the $G$-foam in Axiom~\ref{ax:4}.

    One way to define correlation functions in the background of a $G$-foam is to divide spacetime up into the open regions cut out by the foam, and then impose boundary conditions such that the limit of a field from one side of a hypersurface labelled by $g$ equals the limit of the same field transformed by $g$ coming from the other side of the hypersurface, according to the co-orientation. This is captured by Axiom~\ref{ax:1}. For a symmetry associated with a Noether current, these topological defect operators are given by integrating the component of the current normal to the hypersurface. In a Hamiltonian picture, we can create these defects by applying a symmetry generator in a fixed region \cite{Else_2014}.
    
    We can associate a $G$ gauge field $A$ to such a foam by defining the holonomy ${\rm P}e^{\oint_\gamma A}$ along a closed oriented loop $\gamma$ to be $g_1 \ldots g_n$ where $g_j$ are the labels of hypersurfaces encountered along $\gamma$ (contributing $g$ or $g^{-1}$ depending on whether $\gamma$'s orientation agrees or disagrees with the co-orientation of the corresponding hypersurface) in the order that they are encountered. The holonomy ${\rm P}e^{\oint_\gamma A}$ only depends on the choice of the starting point up to a conjugation in $G$. These holonomies determine $A$ up to gauge transformations. We can thus think of the $G$-foam as a kind of Poincar\'e dual of a $G$ gauge field.
    
    The flatness Axiom~\ref{ax:0} corresponds to the condition that the curvature of this gauge field vanishes, or in other words that $\oint_\gamma A = 0$ around any contractible loop. This forbids the scenario with a $g$-hypersurface simply ending somewhere. Such an object is like a magnetic flux, and while it may be possible to define correlation functions in the presence of these, for our purposes we will not need to. In fact, for compact $G$, the anomaly is completely determined by its finite subgroups, and hence it suffices to study only flat connections. This follows from the result proven in \cite{mathoverflow}. This ensures that the $G$-foam approach is sufficient to characterize anomalies of all compact Lie groups.\footnote{In particular, as we will see, this gives another proof for the obstruction theorems in Section~\ref{sec:conformal} for symmetric boundaries in the presence of perturbative anomalies.
    }
    
    In terms of the Poincar\'e dual gauge field $A$, Axioms~\ref{ax:2},~\ref{ax:3}, and~\ref{ax:4} capture the gauge invariance of the correlation functions. Axiom~\ref{ax:2} corresponds to applying a local symmetry transformation in some region away from any operator insertions. Axiom~\ref{ax:3} corresponds to applying a local symmetry near another defect, causing it to move. Below, we will see that this axiom is violated in systems with gauge-gravity anomalies. For symmetries with a Noether current $j^\mu$, these conditions are equivalent to $\partial_\mu j^\mu = 0$ in any region not containing a defect.

    \subsection{Group Cohomology Anomalies and Wess-Zumino Consistency}
    
    Axiom~\ref{ax:4} encodes the anomaly of the relevant symmetries. To understand it, we first make the following observation.

    \begin{lemma} (Wess-Zumino Consistency)
    Any sequence of isotopies and recombinations occurring in a small region which takes a $G$-foam back to itself results in a trivial phase factor.
    \label{lem:WZ}
    \end{lemma}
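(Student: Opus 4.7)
The plan is to use the fundamental fact that, in the axiomatic framework laid out in Section~\ref{sec:gfax}, correlation functions are well-defined functions of the foam configuration (together with operator insertions, metric, and tangent structure), not of any particular sequence of moves producing the foam. A closed sequence of moves returning a foam to itself must therefore accumulate overall phase~$1$, since otherwise the correlator in the presence of that foam would be multi-valued.

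Concretely, I would let the sequence consist of moves $M_1, \ldots, M_n$ in a small region $U \subset \cM$, passing through foams $F = F_0, F_1, \ldots, F_n = F$. By Axiom~\ref{ax:3}, isotopies contribute phase~$1$, and by Axiom~\ref{ax:4}, each recombination $M_i$ multiplies any correlation function by a phase $\phi_i$, giving a total accumulated phase $\phi = \prod_i \phi_i$. The next step is to pick any correlation function $\langle \mathcal{O}_1(x_1) \cdots \mathcal{O}_k(x_k) \rangle_F$ with all insertions placed outside $U$; since the moves are localized in $U$, the external insertions and all remaining external data are untouched, and applying the full sequence multiplies the correlator by $\phi$. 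But the final foam $F_n$ coincides with the initial $F_0 = F$ (modulo isotopy, which is absorbed in Axiom~\ref{ax:3}), so the two correlators are evaluated on identical data and must be equal, forcing $\phi \langle \mathcal{O}_1 \cdots \mathcal{O}_k \rangle_F = \langle \mathcal{O}_1 \cdots \mathcal{O}_k \rangle_F$. Choosing insertions so that the correlator is nonzero then yields $\phi = 1$.

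The main obstacle is ensuring the existence of such a nonvanishing correlation function in the presence of $F$. This is a mild nondegeneracy assumption satisfied by any nontrivial QFT: one can typically probe the theory with local operators or, in the worst case, appeal to the partition function $\langle 1 \rangle_F$ itself. Since the phase $\phi$ depends only on the local combinatorial data of the recombinations inside $U$, and not on anything outside $U$, the conclusion $\phi = 1$ is universal and independent of the probe one uses. A slight extra complication is that one should verify genuine $\phi$-independence of the probe choice, which follows from the locality built into Axioms~\ref{ax:2}--\ref{ax:4}: since the recombination phases are defined by local modifications in $U$, they cannot depend on operators at distant points $x_j \notin U$.
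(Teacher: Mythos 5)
Your core idea --- that correlation functions are single-valued functions of the foam configuration, so a closed loop of moves multiplying them by an overall phase $\phi$ forces $\phi = 1$ unless the correlators vanish --- is exactly the mechanism of the paper's proof. The place where you diverge, and where your version has a genuine gap, is the final step: you need a \emph{nonvanishing} correlation function in the presence of the specific global foam $F$ on which the closed sequence acts. This is not a mild nondegeneracy assumption. Your fallback probe $\langle 1 \rangle_F$ can vanish for a perfectly healthy theory: a partition function twisted by symmetry defects computes a twisted trace such as ${\rm Tr}\,\bigl(g\, e^{-\beta H}\bigr)$, which is zero whenever the character of $g$ on the Hilbert space vanishes. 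Nor is it obvious that decorating with local operators rescues you, since the global foam background stays fixed throughout your argument. Your closing remark that $\phi$ depends only on the local combinatorial data in $U$ points toward the fix, but you never exploit it to change the global foam --- you only vary the operator insertions.

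The paper closes this gap using Axiom~\ref{ax:2}: rather than running the closed sequence on the given foam $F$, one starts from an \emph{arbitrary} correlation function (in any background, including no foam at all), uses Axioms~\ref{ax:2}, \ref{ax:3} and \ref{ax:4} to create a compactly supported ``bubble'' realizing the relevant local configuration from nothing in a region away from all insertions, runs the offending loop of moves on that bubble, and then destroys the bubble by reversing its creation. The creation and destruction phases cancel move by move, leaving the net factor $\phi$ multiplying the original, arbitrary correlator. If $\phi \neq 1$, this forces \emph{every} correlation function of the theory to vanish, so the contradiction is with the nontriviality of the theory itself rather than with a nondegeneracy property of one particular foam background. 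If you restructure your last paragraph along these lines, the rest of your argument goes through unchanged.
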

    \begin{proof} Suppose this were not the case. Then in any correlation function, away from any of the operator insertions, we would be able to apply Axioms~\ref{ax:2}, \ref{ax:3} and \ref{ax:4} to create and then destroy a ``bubble" of the foam which undergoes the offending recombination and then disappears the same way it came to be. The result is an identity relating the original correlation function to itself with a nontrivial phase factor. Thus, the correlation function (which was arbitrary) must vanish.
    \end{proof}
    
    We will use a very similar argument to show our main result for boundaries in subsequent sections. We note that this is the discrete analog to the Wess-Zumino consistency conditions \eqref{WZ} in Section \ref{sec:conformal}. Indeed, \eqref{WZ} says that the action of infinitesimal gauge transformations integrates to an action of the group of gauge transformations. In particular, if we have a loop of infinitesimal gauge transformations that ends in the identity, it must produce a trivial phase factor in the correlation functions, or else they must vanish.
    
    It follows from Lemma \ref{lem:WZ} that we can perturb our recombinations to be generic without changing the phase factors in Axiom \ref{ax:4}. Generic recombinations  of $G$-foams are sequences of one elementary recombination, known as a Pachner move \cite{PACHNER1991129}, analogous to the $F$-move or crossing relation in 2D foams (which are networks of line defects). To see this, it is useful to think about the ``movie" of a sequence of isotopies and recombinations as a $G$-foam in $D+1$ dimensions. This foam still satisfies the flatness axiom. Perturbing our sequence of recombinations to be generic makes this $G$-foam locally Poincar\'e dual to a triangulation. In particular, the point-like singularities where recombinations occur are all Poincar\'e dual to a $D+1$-simplex, such that each incident hypersurface is associated with one of the $D(D+1)/2$ edges. Let us order the vertices of this $D+1$ simplex $x_0, \ldots, x_{D+1}$. Because of flatness, the $G$ labels on hypersurfaces are all determined by those hypersurfaces passing through the edges $(x_0 x_1), (x_1 x_2), \ldots, (x_D, x_{D+1})$ since every vertex-to-vertex path on the boundary of the simplex can be deformed to one that runs only along these edges. We can thus write our phase factor in Axiom~\ref{ax:4} as $e^{2\pi i \omega}$ with $\omega$ a function
    \[\omega:G^{\times D+1} \to U(1) = \mathbb{R}/\mathbb{Z}\,,\]
    evaluated on these particular labels. We return to this point of view when we derive anomaly in-flow in Section \ref{sec:ainf}. The elementary recombination for $D=2$ and its ``movie" are shown in Fig.~\ref{fig:singularity}.

    \begin{figure}[!htb]
        \centering
        \includegraphics[width=10cm]{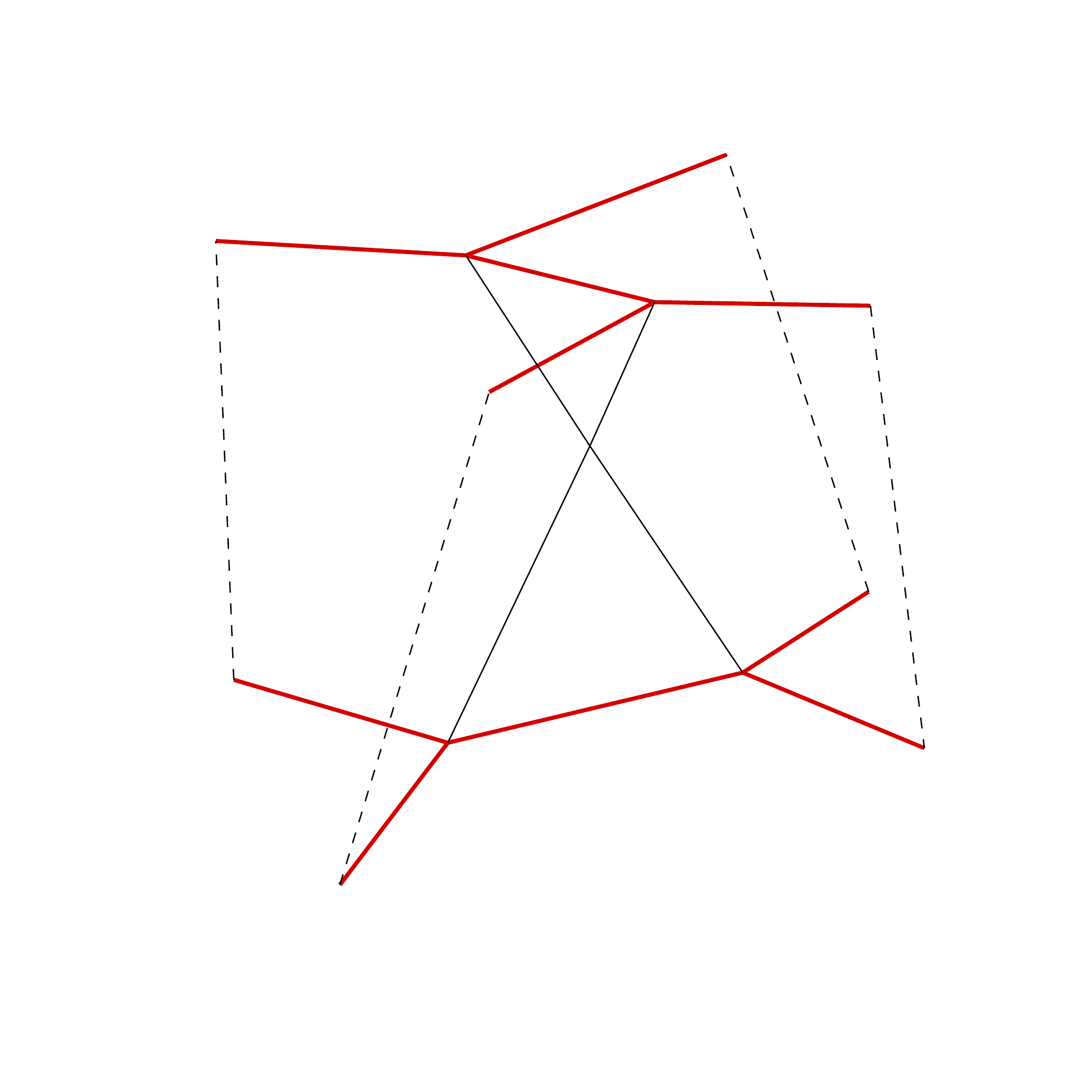}
        \caption{The elementary recombination for $D=2$ (sometimes called an $F$-move or crossing relation), drawn here as relating the red foam at bottom to the red foam at top, is captured by a point-like singularity of a three dimensional foam which is Poincar\'e dual to a tetrahedron and for which the initial and final foams form the top and bottom boundaries.}
        \label{fig:singularity}
    \end{figure}
    
    Lemma~\ref{lem:WZ} implies the group cocycle equation for $\omega$ \cite{brown2012cohomology}:\footnote{This cocycle condition is slightly modified when certain symmetries of $G$ are spacetime-orientation reversing. We revisit this point below.}
    \[
    \begin{gathered}
    (\delta \omega)(g_1,\ldots,g_{D+2}) := \\
    \omega(g_2,\ldots,g_{D+2}) + \left(\sum_{j=1}^{D+1} (-1)^j \omega(g_1, \ldots, g_{j-1} g_j, g_{j+2},\ldots,g_{D+2}) \right) + (-1)^{D+2} \omega(g_1,\ldots, g_{D+1}) = 0,    
    \end{gathered}
    \]
    modulo $1$, since it appears in $e^{2\pi i\omega}$. MacLane's coherence theorem (see Chapter 7 of \cite{lane1998categories} as well as \cite{KAPRANOV1993119}) tells us that this is in fact the \emph{only} condition contained in Lemma~\ref{lem:WZ}. This is quite analogous to the fact that there is a single elementary recombination we need to consider in Axiom~\ref{ax:4}. Indeed, this equation is associated to the single elementary ``recombination of recombinations". This point of view is key in anomaly in-flow, which we return to in Section \ref{sec:ainf}.
    
    We now consider redefining our correlation functions in a background $G$-foam by assigning phase factors to point-like junctions.\footnote{It is possible to locally assign phases to higher dimensional defects in a correlation function by invoking the embedded geometry or tangent structure of the defect, slightly violating our axioms. We do exactly this in Section \ref{sec:gganom} and find there are more ambiguities for $\omega$ than are described in this section.} The point-like junctions are Poincar\'e dual to a $D$-simplex and as above the $G$ labels of the incident hypersurfaces are determined by a $D$-tuple of $G$ elements. A choice of phase factor for each point-like junction is thus a function $\alpha: G^{\times D} \to U(1)$. The effect of this redefinition is to shift $\omega \mapsto \omega + \delta \alpha$, where once again the group coboundary operator $\delta$ appears. This redefinition of the correlation functions is equivalent to adding a counterterm $i \int \alpha(A)$ to the action. Thus, while $\omega$ is somewhat arbitrary, its cohomology class $[\omega] \in H^{D+1}(BG,U(1))$ is invariant under these redefinitions.
    
    If we think about our $G$-foam as a gauge field, and recombinations as gauge transformations, $[\omega] \neq 0$ means we cannot find a counterterm of the form $e^{i \int \alpha(A)}$ which allows us to gauge the theory. Thus $[\omega]$ captures the 't Hooft anomaly. We will refer to this type of anomaly as a \textit{group cohomology anomaly} to contrast with a slight generalization of the above which we will discuss later. For an alternative description of the group cohomology anomaly in a similar spirit but in the Hamiltonian picture, see \cite{Else_2014}. By anomaly in-flow (see Section \ref{sec:ainf}) these anomalies are related to group cohomology SPT phases \cite{pollmannbergturneroshikawa,Wen_2013,CGLW}.
    
    For finite $G$, there is an isomorphism $H^{D+1}(BG,U(1)) = H^{D+2}(BG,\mathbb{Z})$ given by $\omega \mapsto \delta\omega$ (recall $\delta \omega \in \mathbb{Z}$). The anomaly polynomial $\cI_{D+2}[\cT]$ defined in Section \ref{sec:conformal}, for a pure gauge anomaly, defines a class in $H^{D+2}(BG_0,\mathbb{Z})$ for a connected Lie group group $G_0$, and the group cohomology anomaly for each of its finite subgroups $G \subset G_0$ is obtained by restriction and then inverting the isomorphism above. The collection of these restrictions actually determines $\cI_{D+2}[\cT]$ exactly \cite{mathoverflow}. The constructions of this section allow us to define $\omega$ also for continuous symmetry groups directly, subject to the condition that $\omega$ and the counterterms $\alpha$ are measurable functions on $G$. This extra condition ensures that we can integrate our extended correlation functions over the gauge group. This defines the so-called Borel measurable group cohomology $\cH$, which for any compact Lie group also satisfies $\cH^{D+1}(G,U(1)) = H^{D+2}(BG,\mathbb{Z})$ \cite{stasheff1978,wagemann2013cocycle}, capturing both the anomalies of finite groups as above, but also the anomaly polynomials of the connected parts.
    
    \subsection{Spacetime-Orientation-Reversing Symmetries}
    
    Above we tacitly assumed that the symmetries of $G$ were internal, unitary symmetries. More generally we can consider spacetime-orientation-reversing (SOR) symmetries, such as time reversal or reflection symmetries. An advantage of the spacetime picture is that it is more or less clear how to generalize $G$-foams and gauge fields to this setting \cite{kapustin2014symmetry,Witten_2016}.
    
    Essentially to define correlation functions we must choose an orientation inside each open region cut out by the $G$-foam and define the theory inside with respect to that orientation, such that across a hypersurface labelled by an SOR symmetry, the local orientation flips.
    
    This local orientation has two main effects. First, the global structure of the foam is constrained by the topology of spacetime, such that a curve along with the orientation flips an odd number of times must be an orientation-reversing cycle of spacetime. Otherwise, Axioms \ref{ax:0}, \ref{ax:1}, \ref{ax:2}, and \ref{ax:3} are unmodified.
    
    Second, the phase factor $\omega$ from Axiom \ref{ax:4} can depend on this local orientation, but must satisfy the property that if we reverse all the local orientations surrounding the junction where the recombination occurs, keeping everything else the same, then $\omega \mapsto -\omega$ (this holds for the counterterms as well). This is a consequence of Lemma \ref{lem:WZ}. In fact, redoing the argument for consistency of recombinations, we find that for bosonic systems $\omega$ defines a class in the \textit{twisted} group cohomology $H^{D+1}(BG,U(1)^T)$, where $U(1)^T$ indicates that SOR elements of $G$ act by conjugation on $U(1)$, which matches the group cohomology classification of SPTs \cite{CGLW}. With this generalization understood, and all coefficient groups appropriately twisted, we will not explicitly refer to SOR symmetries again.
    
    \subsection{Boundaries for Group Cohomology Anomalies}
	   \label{sec:groupcoh}
    Now let us consider the case with a boundary. We will say the symmetry $G_\cB \subset G$ is unbroken at the boundary $\cB$ if correlation functions in the theory $\cT$ can be defined in the presence of a $G_\cB$-foam which terminates along the boundary, satisfying the five Axioms in Section~\ref{sec:gfax}. In addition to the elementary recombination which can occur among junctions in the bulk, there is a second elementary recombination, which for the boundary foam looks like the bulk elementary recombination in one lower dimension, and for the bulk foam looks like a point-like junction being absorbed or emitted by the boundary. See Fig.~\ref{fig:absorb}.

    \begin{figure}[!htb]
        \centering
        \includegraphics[width=14cm]{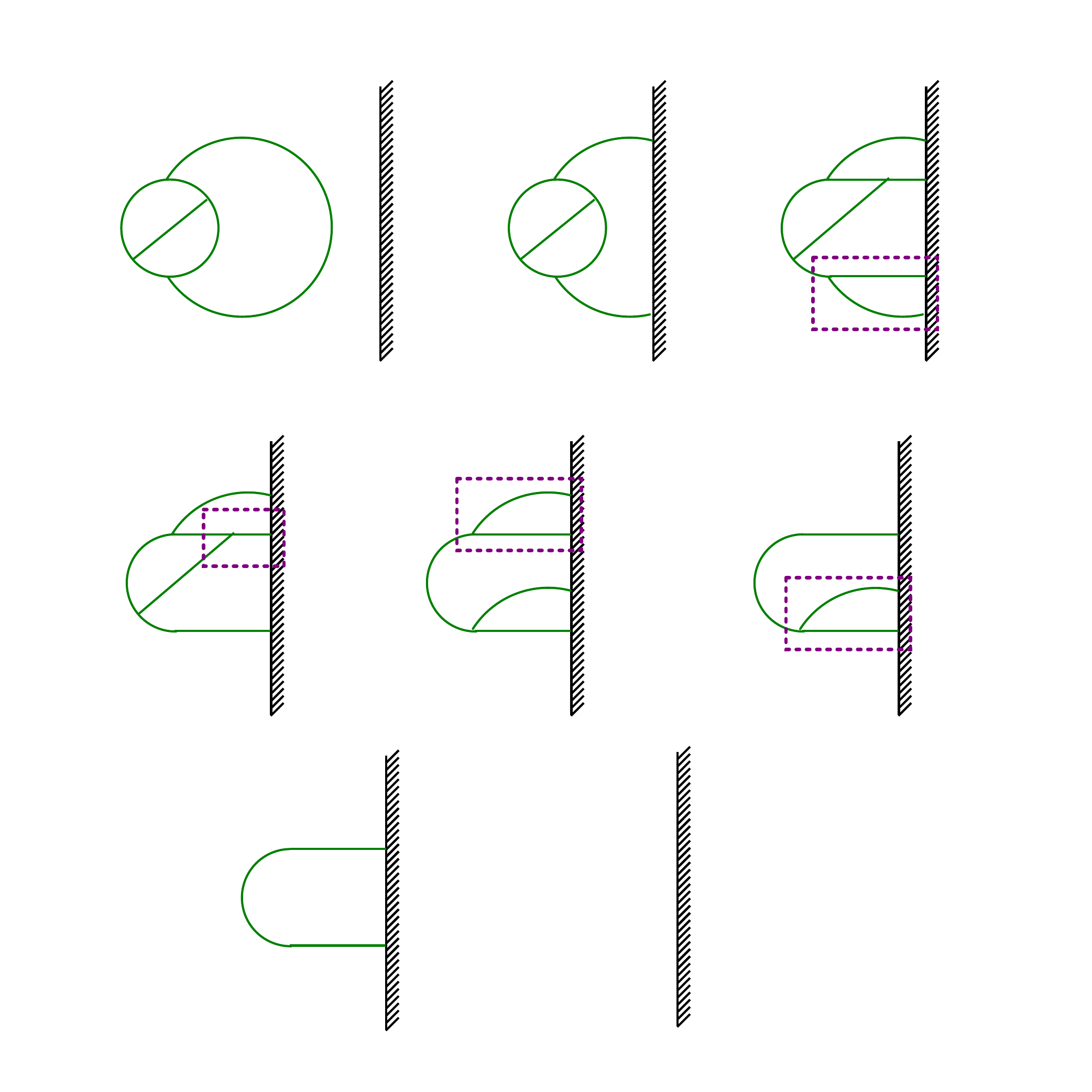}
        \caption{The argument for Theorem~\ref{propwog} illustrated for $D = 2$. At top left we have the special bubble introduced near the boundary. From left to right, top to bottom, this bubble is absorbed by the boundary. In purple we have indicated where absorbing a point-like junction causes a recombination of the boundary defects, and produces compensating phases. The condition that the boundary correlation function is non-vanishing requires that these phases precisely cancel $e^{2\pi i\omega}$ of the original bubble. This requires $\omega$ be exact in group cohomology, hence that there is no anomaly.}
        \label{fig:absorb}
    \end{figure}

    \begin{thm}
    If a theory has a boundary condition with unbroken $G_\cB$ symmetry, as we have defined it, then furthermore the 't Hooft anomaly restricted to $G_\cB$ is also trivial, i.e. $j^*[\omega] = 0 \in H^{D+1}(BG_\cB,U(1)^T)$, where $j:G_\cB \hookrightarrow G$ is the inclusion map for the subgroup $G_\cB$ and $[\omega] \in H^{D+1}(BG,U(1)^T)$ is the anomaly class.
        \label{propwog}
    \end{thm}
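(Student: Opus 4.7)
The strategy is to adapt the Wess--Zumino argument of Lemma~\ref{lem:WZ} to the boundary setting. Since $G_\cB$ is assumed unbroken at the boundary $\cB$, correlation functions in the presence of any $G_\cB$-foam terminating on $\cB$ are well defined and obey the five axioms, including the two elementary recombinations pictured in Fig.~\ref{fig:absorb} (the bulk $F$-move, and the absorption of a bulk point-like junction into the boundary).

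First I would choose a generic region of the boundary, away from any operator insertions, and create a small spherical bubble of $G_\cB$-foam just inside the bulk. By Axiom~\ref{ax:2} this bubble can be introduced with trivial phase. I would then specialize its combinatorial type so that the bubble is Poincaré dual to the boundary of a single $(D{+}1)$-simplex, with edge labels $g_1,\dots,g_{D+1}\in G_\cB$ along a fixed flag of edges. By construction the bubble contains exactly the point-like singularity of the elementary bulk recombination, whose phase factor equals $e^{2\pi i\,\omega(g_1,\dots,g_{D+1})}$ with $\omega$ the representative $G$-cocycle pulled back by $j$.

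Next I would push this bubble through the boundary via a sequence of isotopies (Axiom~\ref{ax:3}) followed by $D+2$ elementary absorption moves, one per codimension-one face of the $(D{+}1)$-simplex. Each absorption of a point-like junction (dual to a $D$-simplex with its own flag $h_1,\dots,h_D$) contributes a phase $e^{2\pi i\,\alpha(h_1,\dots,h_D)}$ depending only on that flag, thereby defining a cochain $\alpha:G_\cB^{\times D}\to U(1)^T$. Standard simplicial bookkeeping, with signs dictated by whether a face lies above or below the apex of the simplex and by the co-orientations (with the SOR twist acting in the appropriate way), shows that the total absorbed phase is $e^{2\pi i\,(\delta\alpha)(g_1,\dots,g_{D+1})}$.

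Finally I close the loop. After all $D+2$ faces have been absorbed, no foam remains in either the bulk or the boundary, so the whole process takes the configuration back to the starting one. Lemma~\ref{lem:WZ}, applied to the combined bulk--boundary system, then forces the product of all accumulated phases to be trivial. Comparing with the bulk phase produced by the single internal $F$-move yields
\[
j^*\omega \;=\; \delta\alpha \pmod{1},
\]
so $j^*[\omega]=0\in H^{D+1}(BG_\cB,U(1)^T)$, which is the statement of the theorem. The main obstacle is the simplicial bookkeeping in the middle step: one must verify that the phases contributed by the $D+2$ absorption events assemble exactly into $\delta\alpha$ with the correct signs and SOR twists. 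This is essentially a coherence check, and its legitimacy is underwritten by MacLane's coherence theorem (as invoked just after Lemma~\ref{lem:WZ}), but it is where all of the combinatorial subtlety resides.
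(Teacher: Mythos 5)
Your proposal is correct and follows essentially the same route as the paper: create the bubble Poincar\'e dual to the boundary of a $(D{+}1)$-simplex (whose creation costs the phase $e^{2\pi i\,j^*\omega}$ via one elementary recombination), absorb it into the boundary so that the $D+2$ point-like junctions each contribute a phase defining a cochain $\alpha:G_\cB^{\times D}\to U(1)^T$, and invoke the Wess--Zumino/non-vanishing-correlator argument to force $j^*\omega=\delta\alpha$. Your version just makes the simplicial bookkeeping of the $D+2$ absorption events slightly more explicit than the paper does.
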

    
    \begin{proof}
    First we observe that there is a compactly supported ``bubble" of $G_\cB$-foam associated to any $D+1$-tuple $(g_0,\ldots,g_D) \in G_\cB^{\times D+1}$ which has the key property that it can be created or destroyed inside any correlation functions while changing its phase by $e^{\pm 2\pi i \omega(j(g_0),\ldots,j(g_D))} = e^{\pm 2\pi  i j^*\omega(g_0,\ldots,g_D)}$. This bubble is obtained by taking the elementary recombination singularity (as in Fig.~\ref{fig:singularity}), intersecting with a small $S^D$ encircling the singular point, and then stereo-graphically projecting it onto a $D$-ball. The key property follows from the fact that by a single recombination we can reduce the bubble to one which occurs as the boundary of a $G_\cB$-foam on $B^{D+1}$ with no singularity. Such a $G_\cB$-foam can be created or destroyed without introducing any phase factors.

    We can then introduce this bubble into any boundary correlation function and then push the bubble through the boundary until its gone, recovering the original correlation function. Either the correlation function (which is arbitrary) must vanish or the phase factors accrued while pushing the bubble through the boundary must cancel $\omega$. These extra phase factors appear when point-like junctions of the bulk $G_\cB$-foam are pushed through the boundary (see Fig.~\ref{fig:absorb}). As we've said these lead to recombinations of the boundary $G_\cB$-foam. Each of these junctions is associated with a $D$-tuple of $G_\cB$ elements, and if we write the phase factors as $\alpha:G_\cB^{\times D} \to U(1)$, the condition that these phase factors cancel $e^{2\pi ij^*\omega}$ is $j^*\omega = \delta\alpha$, where $\delta$ is the (twisted) group coboundary operation. In other words, $j^*[\omega] = 0 \in H^{D+1}(BG_\cB,U(1)^T)$.
    \end{proof}

    \section{Gauge-Gravity Anomalies}
       \label{sec:gganom}
    It is known that the anomalies described above are not the most general kind. In fermionic systems for instance, there could be fermions bound to the 1-dimensional junctions in the foam that modify the cocycle equation for $\omega$ (e.g. into the Gu-Wen equation \cite{supercohomology}), meaning that we must go beyond group cohomology to describe them. There could even be more dramatic effects like chiral modes bound to the defects (such as at the boundary of a $\nu = 1$ topological superconductor in 3+1D) which leads to a violation of the isotopy condition in Axiom~\ref{ax:3}. Most generally, we have to assume that the defects themselves host modes with a gravitational anomaly. For this reason, these more general anomalies are sometimes called gauge-gravity anomalies. In this section, we generalize the symmetry defect picture from the previous section to this setting and prove again that a symmetric boundary condition is only possible if the anomaly is trivial.
    
    \subsection{Generalized $G$-foams}
    
    To account for the gauge-gravity anomalies we must generalize our definition of a $G$-foam. Axioms~\ref{ax:0}, \ref{ax:1}, and \ref{ax:2} still hold, but Axiom~\ref{ax:3} must be modified to account for the possibility that the junctions themselves can host a gravitational anomaly. This has been appreciated in the SPT literature through the discovery of ``beyond cohomology" SPT phases \cite{Vishwanath_2013,Wang_2013,Burnell_2014,supercohomology,Fidkowski_2020}.
   \begin{thmbis}{ax:3}
 Each $k-1$-dimensional defect or junction is associated with an element of the group $\Omega^k$ of $k$-dimensional invertible phases,\footnote{These phases have no specified symmetry.} such that under isotopies, the correlation functions transform according to the boundary gravitational anomalies of these defect or junction labels.
      \label{ax:3p}
\end{thmbis}

\begin{thmbis}{ax:4} Upon recombination, we obtain a phase factor which is the product of the isotopy contribution above and a contribution that depends only on the combinatorics of the recombination.
      \label{ax:4p}
\end{thmbis}
    
    Axiom~\ref{ax:3p} refers both to phase factors accrued by small isotopies of defects (when they carry a perturbative gravitational anomaly) as well as large isotopies, such as taking a defect around a nontrivial cycle of the spacetime, for which there may be phase factors associated with global gravitational anomalies. For instance, a fermionic operator insertion at a point-like defect is characteristic of certain gauge-gravity 't Hooft anomalies \cite{supercohomology}. Taking the point-like defect around a loop of the spacetime with a non-bounding spin structure produces a minus sign.
    
    As before we will consider two realizations of the $G$-symmetry in a given theory $\cT$ to be equivalent if they are related by a redefinition of the correlation functions by phase factors at point-like singularities. Because we allow our correlation function to depend on the details of the embedding of the foam, we will also allow redefinition of them by decorating higher dimensional defects and junctions with invertible phases.
    
    \subsection{Beyond Cohomology Anomalies and Wess-Zumino Consistency}

    The Wess-Zumino consistency conditions of Lemma~\ref{lem:WZ} still apply, although the conditions they impose on the isotopy phases in Axiom~\ref{ax:3p} and the recombination phases in Axiom~\ref{ax:4p} are not completely understood. It is conjectured however that all of the data encodes a certain generalized cohomology invariant for $D+1$ manifolds equipped with a $G$ gauge field through a mathematical construction known as the Atiyah-Hirzebruch spectral sequence (AHSS) \cite{kitaevipamtalk,Kapustin_2017,Gaiotto_2019,shiozaki2018generalized,Garc_a_Etxebarria_2019}.\footnote{In particular the whole anomaly is expected to be expressed as an element of a certain cobordism group $\Omega^{D+1}(BG)$ which encodes the partition function of a $D+1$-dimensional SPT phase, but actually which generalized cohomology appears is not important for our arguments.}
    
    Let us briefly discuss the structure of the AHSS which will be relevant for us. We can consider the labels on $k-1$-dimensional defects as a function $\alpha_k:G^{D+1-k} \to \Omega^k$. With the convention $\Omega^0 = U(1)$, the full anomaly is specified by the functions $\alpha_0,\ldots,\alpha_D$.\footnote{If there were a gravitational anomaly, then additionally we would have an additional constant function $\alpha_{D+1} \in \Omega^{D+1}$ which encodes it.}

    The redefinition of the correlation function by phase factors and invertible phases is described by a collection of maps $\beta_k: G^{D-k} \to \Omega^k$, $k = 0, \ldots, D$. The $\alpha_k$ are modified according to
    \[\alpha_k \mapsto \alpha_k + \delta \beta_k + \Delta^D_k(\beta_{>k})\,,\]
    where the notation indicates that $\Delta^D_k$ is a function depending only on $\beta_j$ for $j > k$. The physical origin of this function is that invertible phases may also host gravitational anomalies along higher codimension defects, not just at their boundary. See Fig.~\ref{figvortex} for an example.

    \begin{figure}
        \centering
        \includegraphics[width=8cm]{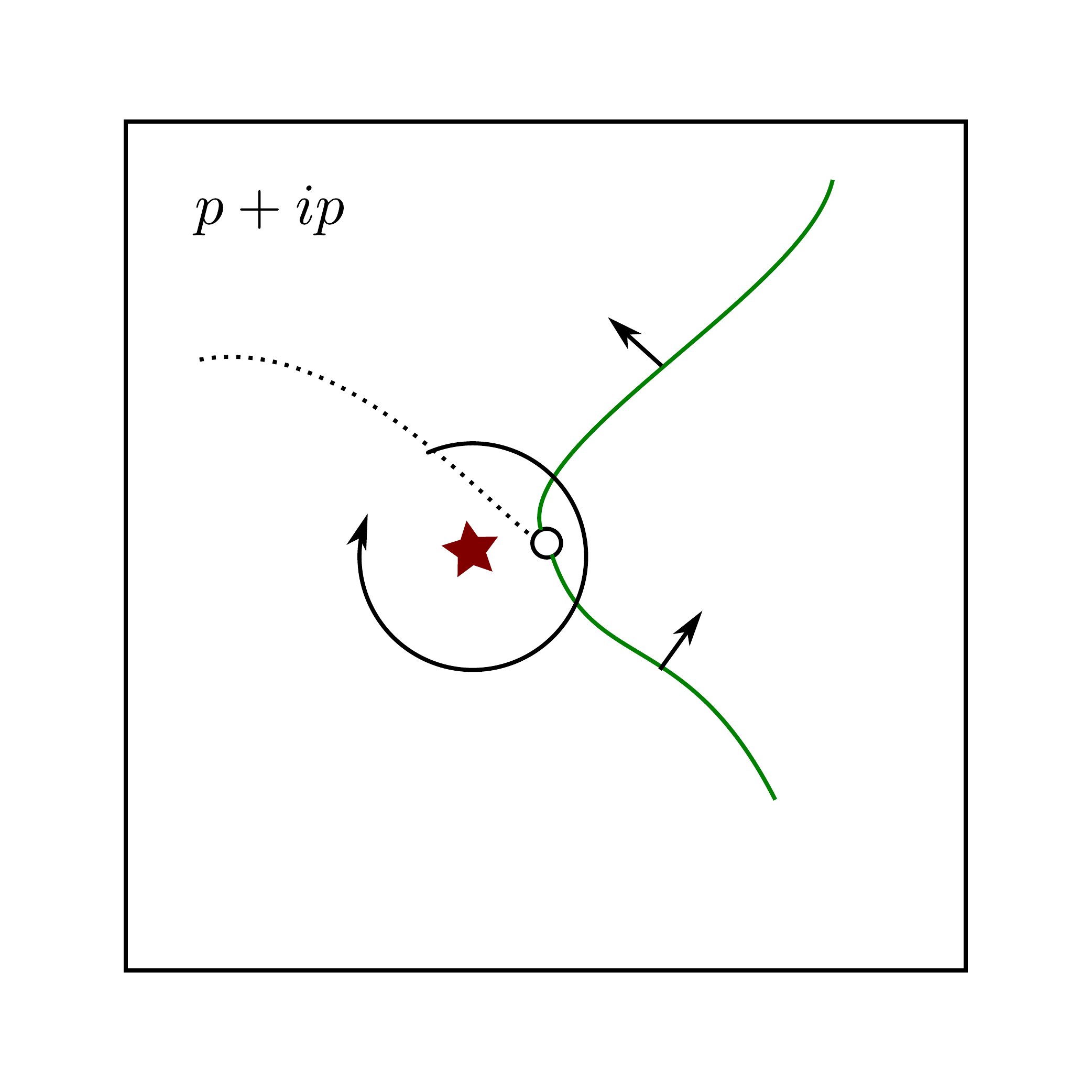}
        \caption{Here we have a spatial picture of a 2+1D fermionic system with a unitary symmetry $C^2 = (-1)^F$. At the white circle, two co-oriented $C$ defects (green) fuse to a fermion parity defect (dashed). This junction may trap a Majorana fermion. However, if it does, then by layering a $p+ip$ superconductor (so taking $\beta_3 = 1 \in \Omega^3_{Spin}$), since the fermion parity defect ends at the fusion junction, the $p+ip$ superconductor sees a vortex there (red star), which also traps a Majorana fermion \cite{Alicea_2012}. This Majorana may be paired with the other one to create a featureless fusion junction. Thus, this decoration does not actually contribute an anomaly in this symmetry class. With more work, one can show that for this symmetry class and dimension, there are in fact no nontrivial anomalies  (see Appendix C.4 of \cite{Garc_a_Etxebarria_2019}).}
        \label{figvortex}
    \end{figure}

    By the AHSS, the complete conditions of Axiom~\ref{ax:4p} can be stated as
    \[\label{generalizedcocycle}\delta \alpha_k + \Delta^{D+1}_{k+1}(\alpha_{> k}) = 0\,,\]
    where the functions $\Delta^{D+1}$ describe the ambiguities in dimension $D+1$. This has an interpretation in anomaly in-flow that our anomalous $G$-foam is associated with an anomaly-free $G$-foam in one higher dimension, but with nontrivial decorations by invertible phases and phase factors (i.e. the $\alpha_k$'s play the role of the $\beta_k$'s in one higher dimension). We return to this in Section \ref{sec:ainf} below. 
    
    Thus, understanding the functions $\Delta^D_k$ is the key to computing the classification of anomalies. They are encoded in the differentials of the AHSS, and satisfy many stringent naturality conditions, but for spin cobordism they are still not completely known. See \cite{thorngren2018anomalies} for some review and recent progress.
    
    Group cohomology anomalies are included in these more general anomalies by taking
    \[\alpha_0 = \omega, \qquad \alpha_k = 0 \quad \forall k > 0\,.\]
    We see that the ambiguity we have encountered previously is still there, given by $\beta_0$. However, there are extra ambiguities which we did not previously consider, given by $\beta_k$ for $k > 0$. This can cause a group cohomology class to become trivial after suitable decorations of the $G$-foam with invertible phases on the defects or junctions. That is, we may find $\beta_k$ such that
    \[0 = \alpha_0 + \delta \beta_0 + \Delta^D_0(\beta_{>0}) \qquad 0 = \delta \beta_k + \Delta^D_k(\beta_{>k}) \quad \forall k > 0\,.\]
    
    In terms of anomaly in-flow (and focusing on $\Omega^k$ given by smooth cobordisms) this amounts to a Dijkgraaf-Witten term which is trivial on all smooth manifold spacetimes. It is known that the first time this can happen is for six dimensional manifolds, i.e. for anomalies in $D = 5$ spacetime dimensions \cite{thom,sullivanonthom}. It is rather wonderful that all the conditions imposed by smoothness amount to the purely algebraic conditions above, which demystifies this fact: if a Dijkgraaf-Witten term is trivial on all smooth manifolds, then there is a decoration of the boundary by invertible phases which cancels the anomaly.
    
    \subsection{Boundaries for General Anomalies}
    \label{sec:genanom}
    
    Let us consider the largest $k$ for which $\alpha_k$ can not be eliminated by decorating the $G$-foam with invertible phases. We will call this the \textbf{height} of the anomaly. By the AHSS, in this case we can take $\alpha_j = 0$ $\forall j > k$. It follows from Lemma~\ref{lem:WZ} that $\delta \alpha_k = 0$ (cf. \eqref{generalizedcocycle}). We will interpret this as a generalized conservation law, show that it is violated at a hypothetical symmetric boundary, and argue that this is unphysical. Although the AHSS implies this conservation law, we feel that a physical formulation and proof would be quite desirable, and so we will make the conjecture
    \begin{conj}\label{conjgravbdy}
    A system with a gravitational anomaly admits no boundaries.
    \end{conj}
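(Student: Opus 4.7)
The plan is to adapt the bubble-pushing strategy of Theorem \ref{propwog} to the purely gravitational setting, focusing on global anomalies not already treated by Corollary \ref{cor:pertg}. The obstruction data is then a height-$(D{+}1)$ class $\alpha \in \Omega^{D+1}$ rather than a class in $H^{D+1}(BG,U(1)^T)$, and the ``gauge transformations'' come from isotopies of tangent structure rather than from a $G$-foam. The first task is to construct a gravitational analog of the anomalous bubble: by hypothesis $\alpha$ is nontrivial as a cobordism invariant, so Pontryagin--Thom duality furnishes a closed $(D{+}1)$-manifold $W_0$ (carrying the relevant tangent structure) on which $\alpha[W_0] \neq 0$. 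Excising a small ball from $W_0$ yields a ``bubble'' $B^{D+1}$ whose $S^D$ boundary carries a nontrivial decoration by defects and invertible phases and which, when inserted into a correlation function in an otherwise anomaly-free background, multiplies it by a detecting phase $\chi_\alpha \neq 1$. Lemma~\ref{lem:WZ} together with the AHSS consistency \eqref{generalizedcocycle} ensures this insertion is well-defined up to the trivialization freedom of $\alpha$.

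Next I would place $\cT$ on a spacetime with the putative symmetric boundary $\cB$, introduce the bubble far from $\cB$, and then push it through, in direct analogy with Fig.~\ref{fig:absorb}. Either the correlation function (which is arbitrary) vanishes identically, or the phases absorbed by the boundary during the pushing must cancel $\chi_\alpha$. These absorbed phases arise from recombinations of $\cB$-side defects together with the invertible-phase decorations that $\cB$ must carry, and by the AHSS they collectively assemble into a $D$-dimensional cochain whose twisted coboundary equals $\alpha$. Existence of $\cB$ therefore forces $\alpha = 0 \in \Omega^{D+1}$, contradicting the assumption that the gravitational anomaly is nontrivial.

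The main obstacle is that, unlike the group-cohomology case where the equation $\delta \omega = 0$ closes purely combinatorially, the generalized differentials $\Delta^{D+1}_k$ appearing in \eqref{generalizedcocycle} are only partially known beyond low dimensions, so the claim that every detecting bubble can actually be absorbed by some family of boundary decorations demands explicit control over the AHSS for the relevant tangent structure (oriented, spin, pin, etc.). This is why the authors restrict to $D \le 3$ for bosons and $D \le 7$ for spin theories, where the groups $\Omega^{D+1}$ and their detecting manifolds are all computable. A fully general proof would require either a universal naturality statement identifying the bubble-pushing obstruction with the full AHSS differential, or a Hamiltonian reformulation in which $\alpha$ is identified with a projective cocycle for the mapping-class-group action on a spatial Hilbert space $\cH_{\partial \Sigma}$, so that the nonexistence of a $\cG$-invariant boundary state for $\alpha \neq 0$ can be read off directly from the projectivity.
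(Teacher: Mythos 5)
There is a genuine gap, and it is worth being clear that the statement you are trying to prove is presented in the paper as a \emph{conjecture} precisely because the strategy you propose does not go through. The crux is your first step: a pure gravitational anomaly has no analog of the ``special bubble'' of Theorem~\ref{propwog}. That bubble exists because a group-cohomology anomaly is carried by a \emph{defect network}: one intersects the elementary recombination singularity with a small $S^D$ and obtains a compactly supported $G$-foam that can be spawned in a contractible region of spacetime and then annihilated, by Axioms~\ref{ax:2}--\ref{ax:4}. A gravitational anomaly, by contrast, is detected only by the global topology and tangent structure of a closed $(D{+}1)$-manifold $W_0$; excising a ball from $W_0$ does not yield anything you can insert as a localized perturbation of the spacetime of $\cT$, since the tangent bundle over a ball is trivial and there is no ``foam of tangent structure'' to recombine. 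There is therefore nothing to push through the boundary, and Lemma~\ref{lem:WZ} gives you no detecting phase $\chi_\alpha$ to cancel. This is exactly the circularity the authors warn about: gravitational anomalies are \emph{defined} as boundaries of invertible phases, so a ``boundary of a boundary'' argument needs an independent, intrinsic characterization that is not currently available. Your fallback via mapping-class-group projectivity runs into the further obstruction, noted in the paper, that it is likely false that all relevant $Spin$ and $SO$ bordism classes are represented by mapping tori, so diffeomorphisms need not detect the anomaly at all.

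What the paper actually does is split the conjecture into cases and verify each by a different, concrete mechanism rather than by a uniform bubble argument. Perturbative gravitational anomalies of Lorentz-invariant theories are excluded by the Schwinger-term and Ward-identity analysis of Section~\ref{sec:conformal} (Corollary~\ref{cor:pertg} and Theorem~\ref{thm:vanpa}). The known low-dimensional global gravitational anomalies are excluded by the physics of the specific invertible phases: a state of odd fermion parity ($\Omega^1_{Spin}=\bZ_2$) cannot terminate because fermion parity is unbreakable; a Kitaev string ($\Omega^2_{Spin}=\bZ_2$) cannot end because a single Majorana mode has no Hilbert space; a holomorphic CFT ($k=3$) admits no boundary by the chiral-current argument. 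Your closing remark that the dimensional restriction reflects ``computability of $\Omega^{D+1}$ and its detecting manifolds'' mischaracterizes this: the restriction comes from having a direct physical no-boundary argument for each known generator, not from bordism computations. So while your final paragraph correctly identifies that something is missing, the missing ingredient is not control over the AHSS differentials $\Delta^{D+1}_k$ but the absence of any localized object playing the role of the bubble in the purely gravitational setting.
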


    \begin{thm}

    Assuming the conjecture, if a theory $\cT$ has a boundary condition which preserves a symmetry $G_\cB \subset G$, then $G_\cB$ is anomaly-free.
        \label{propwg}
    \end{thm}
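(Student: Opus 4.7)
The plan is to adapt the bubble argument of Theorem~\ref{propwog} to the decorated $G$-foams of Section~\ref{sec:gganom}, inducting downward on the height $k$ of the anomaly when pulled back along $j:G_\cB \hookrightarrow G$. After absorbing trivializable pieces into invertible-phase decorations $\beta_{>k}$, one may assume $\alpha_j|_{G_\cB}=0$ for $j>k$ and $\alpha_k|_{G_\cB}$ nontrivial. The case $k=0$ is Theorem~\ref{propwog}, so assume $k\ge 1$ and aim for a contradiction.

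For each $(D+1-k)$-tuple $(g_0,\ldots,g_{D-k})\in G_\cB^{\times(D+1-k)}$ I would construct a compactly supported bubble in the bulk, by analogy with the stereographic image of the elementary recombination in Fig.~\ref{fig:singularity}: take a generic recombination whose singular locus is Poincar\'e dual to a $k$-simplex, whose $(k-1)$-dimensional junction is decorated by the invertible phase $\alpha_k(g_0,\ldots,g_{D-k})\in\Omega^k$, and stereographically project to a ball. By the AHSS cocycle condition \eqref{generalizedcocycle} applied with $\alpha_{>k}=0$, creating or destroying this bubble in the bulk costs only a phase controlled by $\alpha_k$ together with lower-height data, and is otherwise a well-defined bulk operation.

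Insert the bubble into an arbitrary boundary correlation function near $\cB$ and push it rigidly through the boundary, as in Fig.~\ref{fig:absorb}. Junctions of dimension less than $k-1$ crossing $\cB$ trigger recombinations of the boundary $G_\cB$-foam together with phase and invertible-phase factors, all of which can be cancelled by a choice of boundary counterterms $\beta_{<k}$, since by hypothesis those heights are trivializable. The essential new event is when the decorated $(k-1)$-dimensional junction meets $\cB$ along a $(k-2)$-dimensional locus: its termination supports on a $(k-1)$-disk of $\cB$ the boundary theory of the invertible phase $\alpha_k(g_0,\ldots,g_{D-k})$, whose gravitational anomaly, by Axiom~\ref{ax:3p}, is precisely $\alpha_k(g_0,\ldots,g_{D-k})$.

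Since the original correlation function was arbitrary, its value after bubble insertion and absorption must match, forcing this terminal $(k-1)$-dimensional gravitationally anomalous theory to exist as a boundary. By Conjecture~\ref{conjgravbdy} this is impossible unless $\alpha_k(g_0,\ldots,g_{D-k})$ is trivial in $\Omega^k$; since $(g_0,\ldots,g_{D-k})$ was arbitrary, $\alpha_k|_{G_\cB}$ is trivializable, contradicting the definition of height $k$. Hence the restricted anomaly has height $0$, and Theorem~\ref{propwog} finishes the proof. The main obstacle is the clean local identification, near the $(k-2)$-dimensional meeting locus, of the boundary termination data with a genuine $(k-1)$-dimensional theory carrying the boundary gravitational anomaly of $\alpha_k$; this requires meshing Axioms~\ref{ax:3p} and \ref{ax:4p} with the AHSS differentials, after which Conjecture~\ref{conjgravbdy} closes the argument.
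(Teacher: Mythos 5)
Your overall strategy (reduce to the height $k$, build a decorated bubble, push it through the boundary, invoke Conjecture~\ref{conjgravbdy}) follows the paper's outline, but there is a genuine gap at the decisive step. When the decorated $(k-1)$-dimensional bulk junction meets $\cB$, you conclude that its anomalous modes must \emph{terminate}, so that Conjecture~\ref{conjgravbdy} forces $\alpha_k(g_0,\ldots,g_{D-k})$ to be trivial pointwise in $\Omega^k$. This ignores the physically allowed escape route that is the entire content of the paper's proof: a symmetric boundary condition may host its own $G_\cB$-foam whose $(k-1)$-dimensional boundary junctions carry gravitationally anomalous modes, described by functions $\beta_k: G_\cB^{\times(D-k)}\to\Omega^k$ (and higher-dimensional anomalous boundary defects described by $\beta_{>k}$). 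The bulk junction can continue into, or be absorbed by, these decorated boundary junctions without ever presenting a ``boundary of a gravitational anomaly,'' so the conjecture does not directly constrain $\alpha_k$ at a single tuple. What the conjecture does give is a conservation law for the anomalous modes, and imposing that conservation at the boundary yields the trivialization equation $0 = \alpha_k + \delta\beta_k + \Delta^D_k(\beta_{>k})$ --- it is \emph{this} equation, not pointwise vanishing, that contradicts the assumption that the height is $k$. The paper closes the loophole explicitly via the ``subliminal side'' construction (Fig.~\ref{fig:subliminal}), extending the foam across the boundary with the invertible phases $-\beta_j$ so that isotopy invariance is restored and the in-flow bookkeeping produces the displayed equation.

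A secondary imprecision: for $k\ge 1$, creating or destroying your decorated bubble does not cost ``only a phase'' --- the singular locus carries an invertible phase in $\Omega^k$, i.e., actual anomalous modes, so the operation changes the correlation function by more than a scalar. This is why the paper's argument for $k\ge 1$ is phrased as conservation of these extended anomalous objects (they must trace out closed submanifolds, which is where Conjecture~\ref{conjgravbdy} enters) rather than as a phase-cancellation identity as in the $k=0$ case of Theorem~\ref{propwog}. Your final conclusion is correct, but the step from ``the junction cannot terminate into nothing'' to ``$\alpha_k$ is trivial'' skips exactly the case the theorem has to rule out.
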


    \begin{proof}
    
    It is apparent that our argument for Theorem~\ref{propwog}, which amounts to the case $k = 0$, must be modified to account for gauge-gravity anomalies at the boundary. Indeed, we used isotopy invariance of the boundary defects to conclude $\alpha_0 = j^*\omega$ is exact. In a more general foam, according to our modified axioms, the boundary defects are not necessarily isotopy invariant, but can carry modes with a gravitational anomaly. This gravitational anomaly in turn may be associated with an invertible phase in one higher dimension, which is naturally associated with the bulk defect which ends on the boundary, giving us a set of $\beta_k$'s as above. The condition that the phase factor we obtain in spawning our defect bubble and then absorbing it into the boundary is one amounts to the condition that if we decorate these bulk defects according to the $\beta_k$'s (including $\beta_0$), the anomaly is trivial.

    To see this, we can picture the boundary as an interface, where on the other side of the boundary, which we will call the ``subliminal" side, we allow a $G_\cB$-foam whose contribution to the correlation function is given by the invertible phases $-\beta_k$ defined above. See Fig.~\ref{fig:subliminal}. When we push defects from the bulk into the boundary, we can let these defects pass into the subliminal side as well as leaving defects on the boundary as before. Because of the decoration by invertible phases, the isotopy non-invariance of the boundary defects is restored for the whole foam by in-flow. If we create one of our special bubbles and push it through the boundary, the correlation function does not change, but the bubble is now entirely on the subliminal side and so its associated phase factor is completely accounted for by the decorations and boundary recombination phases accrued while pushing the bubble through the boundary, giving $\beta_0$.
    
    \begin{figure}
        \centering
        \includegraphics[width=7cm]{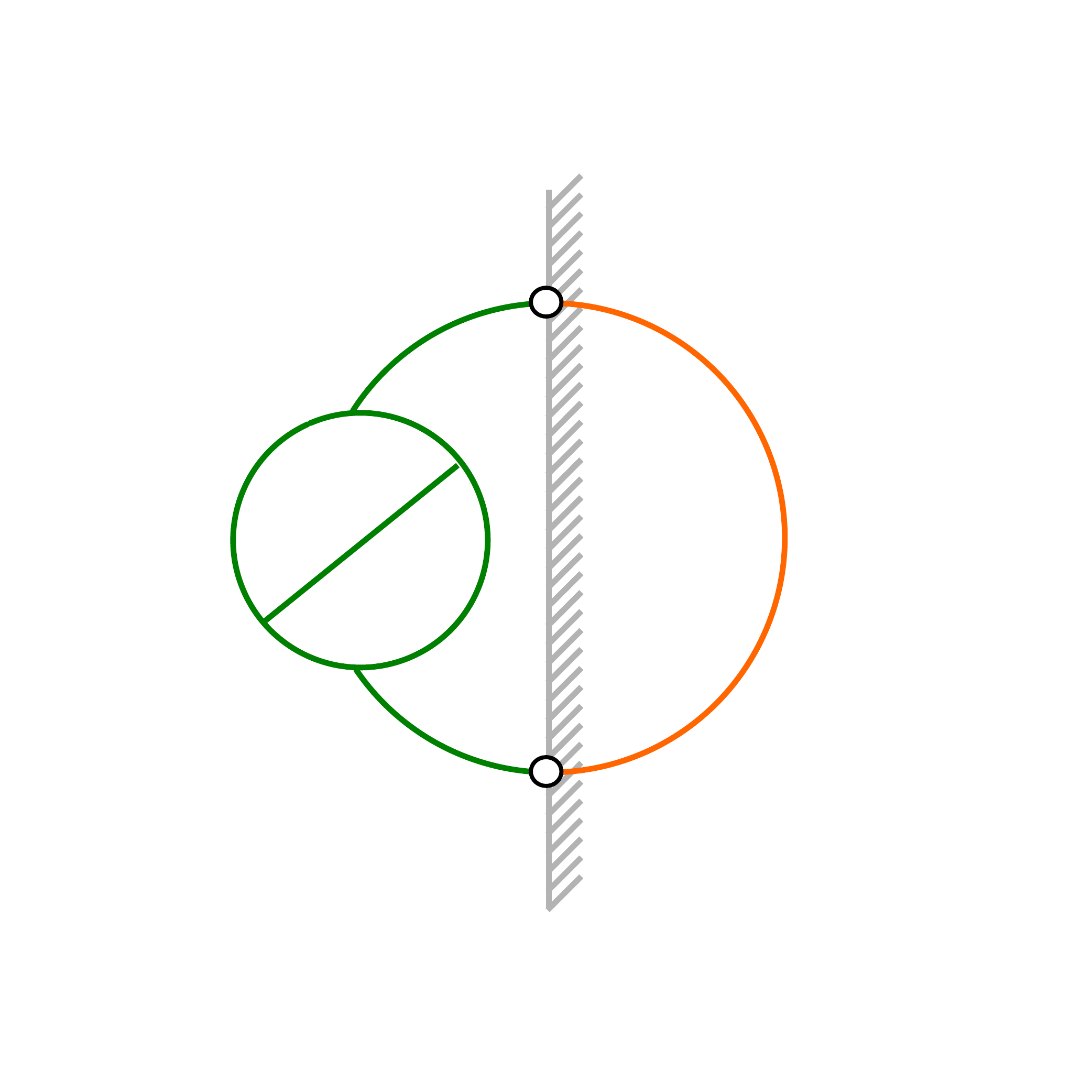}
        \caption{A bubble halfway absorbed into a symmetric boundary condition drawn for $D = 2$. The bulk $G_\cB$-foam is drawn in green. Where it ends on the boundary (white circles) there is a possible gravitational anomaly, described by a function $\beta_1: G_\cB \to \Omega^1$, which in a spin theory could be a fermionic operator, $\Omega^1_{spin} = \bZ_2$. This gravitational anomaly means that the $G_\cB$-foam with boundary is not isotopy invariant, and so we cannot apply the argument we gave in Section \ref{sec:defect} for Prop. \ref{propwog}. However, by introducing the subliminal $G_\cB$-foam (orange) carrying the invertible phase corresponding to $\beta_1$ (for $\Omega^1_{spin}$ it is a fermionic world line) the gravitational anomaly is cured and the entire object is isotopy invariant.}
        \label{fig:subliminal}
    \end{figure}

    Let us consider the next nontrivial example $k = 1$, which is associated with $\Omega^1_{Spin} = \bZ_2$, corresponding to a 0+1D invertible phase which is merely a state with odd fermion parity (there is no such phase for bosons as $\Omega^1_{SO} = 0$). These anomalies occur at the boundary of the so-called Gu-Wen supercohomology phases \cite{supercohomology}.
    
    Fermion parity is an unbreakable symmetry in any local fermionic system. Indeed, it is not possible for any charged order parameter to have long range order, since it anti-commutes with itself at separated points. This also applies when we have a boundary. However, we will see that at a $G_\cB$-symmetric boundary of a system with this anomaly, fermion parity cannot be conserved.
    
    The interpretation of the $k = 1$ anomaly is that $\alpha_1(g_1, \ldots, g_D) \in \bZ_2$ describes point-like junctions of $G_\cB$ domain walls that carry fermion parity. Conservation of fermion parity requires $\delta \alpha_1 = 0$ (as we expected from the AHSS). Otherwise we would be able to perform a recombination of the $G_\cB$-foam and change the total fermion parity of a correlation function. Note that here we use the fact that the height is $k=1$. If there were anomalous defects of higher dimension, it would be possible for fermions to be absorbed or emitted by them under these recombinations.
    
    Analogous to the previous section, if we have a symmetric boundary, it can absorb these junctions, disintegrating into a number of point-like boundary defects. Let us first suppose there are no gravitational anomalies of higher height that occur along the boundary. Then for fermion parity to be conserved, these boundary defects must carry an odd total fermion parity. The assignment of fermion parity to the defects is a function $\beta_1: G_\cB^{\times 2n-1} \to \bZ_2$ and fermion parity conservation then amounts to $\delta \beta_1 = \alpha_1$, contradicting our assumption of a nontrivial bulk anomaly.
    
    Suppose more generally that there are also higher dimensional boundary defects with gravitational anomalies which an absorb these fermionic defects. These boundary anomalies define a set of functions $\beta_j$. Analogous to the $k = 0$ case and Fig.~\ref{fig:absorb}, we can use these $\beta_k$ to define an extension of the correlation functions to $G_\cB$-foams crossing an interface, where each anomalous boundary mode is associated with an invertible phase decorating the $G_\cB$-foam on the subliminal side. We find
    \[0 = \alpha_1 + \delta \beta_1 + \Delta^D_1(\beta_{>1}).\]
    Indeed, by pushing a compact bulk foam (which may contain fermionic defects) to the subliminal side of the boundary, we see the fermionic defects are entirely accounted for by the invertible phases on the $G_\cB$-foam on the other side. Thus, using them as decorations in the bulk removes all fermionic defects from the bulk foam. This contradicts our assumption that the height is $k=1$. Therefore, there is no symmetric boundary condition.

    For $k > 1$ the argument is exactly the same, except we are now concerned with the ``conservation" of extended objects which host special modes. More precisely, a generic $k-1$-dimensional junction of domain walls is described by a tuple $(g_1,\ldots, g_{D+1-k})$ and the anomaly says that this defect carries the boundary modes of the $k$-dimensional invertible phase $\alpha_k(g_1,\ldots,g_{D+1-k}) \in \Omega^k$. For example, for $k = 2$, the invertible phase generating $\Omega^2_{Spin} = \bZ_2$ for fermions is the Kitaev string, and its boundary hosts a single Majorana zero mode. For $k = 3$ we may have defects decorated by holomorphic CFTs, and so on with higher dimensional systems with gravitational anomalies.
    
    The conservation law tells us that these objects must always occupy (images of) closed submanifolds in spacetime, in other words, it is Conjecture \ref{conjgravbdy}. Indeed, for $k = 2$ a single Majorana mode has no Hilbert space, so we cannot let it terminate on a state, and for $k = 3$ a holomorphic CFT admits no boundary conditions because it is chiral \cite{Jensen:2017eof} (see also Section~\ref{sec:conformal}). This is equivalent to the cocycle condition $\delta\alpha_k = 0$.\footnote{Recall that the cocycle condition only necessarily applies to $\alpha_k$, where $k$ is the height, since it is possible for certain anomalous defects of lower dimension to end on other, bigger anomalous defects, modifying the cocycle condition to \eqref{generalizedcocycle}.}

    To conclude the argument, we observe that at a symmetric boundary, the $k-1$-dimensional junctions which carry these anomalous modes can break open, disintegrating into a number of $k-1$-dimensional boundary junctions. As before, these anomalous modes must be carried by the $k-1$-dimensional boundary defects or be absorbed into anomalous defects of higher dimension. The condition that these modes are conserved at the boundary defines a set of $\beta_j$ such that
    \[0 = \alpha_k + \delta \beta_k + \Delta^D_k(\beta_{>k}),\]
    contradicting the assumption that the height is $k$. Thus, there can be no symmetric boundary.

    \end{proof}

                                                                                                                                                                                           Above we have given justification for Conjecture \ref{conjgravbdy} for gravitational anomalies in small dimensions, enough to conclude the theorem in dimensions $D \le 4$. Unfortunately it is difficult to extend these arguments to all dimensions without an intrinsic characterization of gravitational anomalies (rather than defining them as the boundaries of invertible phases and risking making a circular ``boundary of a boundary" argument). For perturbative anomalies of Lorentz invariant systems we can use the methods in Section 
    \ref{sec:conformal}, but for global anomalies things are more subtle. One might for example study certain diffeomorphisms in the presence of a boundary to deduce the vanishing of correlation functions, similar to what we have done in the $k = 0$ case with our special bubbles. This can likely be done for the next known gravitational anomaly, which occurs for $k = 5$ in bosonic systems and is associated with a large diffeomorphism of $\mathbb{CP}^2$ (complex conjugation of the homogeneous coordinates). For fermionic systems the next known gravitational anomaly has $k = 7$ and is a perturbative anomaly, for which our results of Section \ref{sec:conformal} apply. However, it is not even clear that all the gravitational anomalies encoded in the cobordism classification are detectable by diffeomorphisms (i.e. that the corresponding $Spin$ and $SO$ bordism classes are represented by mapping tori). In fact it seems rather unlikely to be the case. We therefore need a better intrinsic characterization of global gravitational anomalies. We leave this interesting question to future work.
    
    We used unitarity implicitly above in formulating the anomaly as a class in generalized cohomology. Indeed, the main TQFT justification for the cobordism classification relies on it \cite{freed2019reflection}. In Section~\ref{sec:nuea} we discussed the non-unitary $bc$ system which has a gauge-gravity anomaly but also a symmetric boundary condition. This anomaly has a similar feature to the ones we have described here, in particular it says that spacetime curvature binds vector charge. We can attempt a similar no-go argument as above in this case, by considering a flat disc which is deformed into a round hemisphere. In this process, bulk vector charge is created, but by the Gauss-Bonnet theorem, it can be balanced by vector charge associated with the extrinsic curvature of the boundary (see around \eqref{gcrd}). Thus, we cannot conclude the anomaly is trivial, in accordance with what happens in the $bc$ CFT.
    
    \section{Anomaly In-Flow Revisited}
    \label{sec:ainf}
    We did not use anomaly in-flow in the above arguments, but in fact a form of anomaly in-flow follows from our axioms. The basic idea is to draw the ``world-volume" of isotopies and recombinations of a $G$-foam on spacetime $\cM$ as a $G$-foam in $\cM \times [0,1]_s$. The flatness Axiom~\ref{ax:0} implies the same for this foam. Bubble creation/annihilation in Axiom~\ref{ax:2} corresponds to critical points of $s$ in the foam of Morse index 0 or $D+1$. The phase factors in Axioms~\ref{ax:3} and \ref{ax:4} define a phase $e^{2\pi i\Omega}$ associated with the foam, given by the product of $e^{2\pi i \omega}$ for each singularity. This phase is evidently local in the sense that the recombination phases (Axiom~\ref{ax:4}) are associated with the 0-dimensional singularities of this foam and the isotopy phases (Axiom~\ref{ax:3}) can be expressed as an integrated density on the defects.
    
    Lemma~\ref{lem:WZ} is equivalent to isotopy and recombination \emph{invariance} of this phase $e^{2\pi i\Omega}$ while holding the boundary $G$-foams fixed. Indeed, suppose there were two $G$-foams $A$ and $B$ on $\cM \times [0,1]_s$ which are related by isotopies and recombinations with phase factors $e^{2\pi i\Omega(A)}$ and $e^{2\pi i \Omega(B)}$ respectively. Since $A$ and $B$ have the same boundary foam, we can glue $A$ to an $s$-reversed copy of $B$ to obtain a foam on $\cM  \times [0,2]$ which has the same boundary on $\cM \times \{0\}$ as $\cM  \times \{2\}$ and a nontrivial phase factor $e^{2\pi i (\Omega(A) -  \Omega(B))}$ (here we use locality so the phases multiply). Because the two boundaries of this glued foam are the same, it represents a series of isotopies and recombinations which take the foam back to itself. By Lemma~\ref{lem:WZ}, $e^{2\pi i (\Omega(A) -  \Omega(B))} = 1$.
    
    The phase associated with this auxiliary foam is usually considered to define a topological action of a background gauge field in $D+1$-dimensions, which might be obtained by integrating out some gapped matter with an anomaly-free $G$-symmetry. If we define our anomalous theory $\cT$ on the boundary of one with the topological action $e^{-2\pi i\Omega}$, meaning we consider correlation functions of boundary operators in the presence of a bulk $G$-foam extending the boundary $G$-foam by defining the correlation function as the product of the boundary correlation function and $e^{-2\pi i\Omega}$ of the bulk foam, then by construction these correlation functions satisfy the desired axioms and are moreover isotopy and recombination invariant. Thus we say the bulk has cancelled the anomaly.
    
 \section{Discussion}
 \label{sec:discussion}

Let us comment on some applications and extensions of our obstruction theorems for symmetric boundary conditions in the presence of bulk 't Hooft anomalies. 

We have already explained in Section~\ref{sec:perteg} how our results are consistent with known boundary conditions of QFTs in the case of perturbative anomalies.  
As a simple application for non-perturbative anomalies, we can consider free non-chiral fermions in $D=1+1$, which have an anomalous $\mZ_2^L\times \mZ_2^R$ chiral fermion parity symmetry. The anomalies are captured by the $D=2+1$ SPT associated to $\Omega^3_{Spin}(B\mZ_2)=\mZ_8$ and the theory is anomaly free if the number of the Majorana fermions satisfies $N_f\in 8\mZ$ \cite{Fidkowski_2010,Ryu_2012,Qi_2013,Yao_2013}. Therefore $\mZ_2^L\times \mZ_2^R$ preserving boundary conditions are only possible for $N_f\in 8\mZ$. Recently conformal boundary conditions for an even number of Majorana fermions (with $N_f=2N$) preserving an anomaly free $U(1)^N$ global symmetry have been classified in \cite{Smith:2020nuf}, and indeed it was found that the chiral parity is preserved if and only if $N\in 4\mZ$.

 The arguments we have described readily extend to higher form symmetries \cite{Gaiotto_2015}. The definition of a $G$-foam is more or less the same except that there can be defects of codimension greater than one which are not junctions of hypersurfaces. For example, a 1-form symmetry will be associated with symmetry defects of codimension two. Anomalies are again described as either phases associated with recombinations of the foam or gravitational anomalies along certain junctions. The former is described by the ordinary cohomology of the classifying space of the higher form symmetry $n$-group, and the latter by a generalized cohomology thereof, most likely spin or oriented cobordisms.
 
 Our results also apply to domain walls between theories each with a different 't Hooft anomaly. A simple way to see this is to realize that any domain wall defines a boundary condition of the ``folded" theory, and if the two sides of the domain wall have different anomalies, the folded theory will be anomalous as well, and our results apply.
 
 This can be used to derive interesting properties of domain walls. For example, it is known that the $D=4$ Yang-Mills theories with a (1-form) center symmetry, such as with $SU(2)$ gauge group and adjoint matter, have a mixed anomaly between time reversal and the center symmetry  at $\theta = \pi$ \cite{thetatimereversaltemp}, while meanwhile at $\theta = 0$ there is no anomaly. Our results imply that at a domain wall between the two, there must be symmetry breaking. In particular, if time reversal is preserved, we must have center symmetry breaking, hence deconfinement on the wall. This differs from other mechanisms of deconfinement on domain walls studied in \cite{Sulejmanpasic_2017}, in which the domain wall itself is formed by breaking one of the anomalous symmetries (see also \cite{Hason_2020}).
 
 It may happen that there is some (normal) subgroup $H$ of the global symmetry $G$ such that all $H$-charged states are gapped, and we have an effective field theory of the low energy degrees of freedom with global symmetry $G/H$ (that acts faithfully). It is possible in this case that the $G/H$ symmetry of the effective field theory is anomalous, while the ``microscopic" $G$-symmetry is anomaly-free. This is called an emergent anomaly. In \cite{thorngren2020intrinsically}, the authors (including one of us) proposed that systems with emergent anomalies have certain SPT-like properties, including boundary conditions which must either break the symmetry or carry localized modes. Our results imply that if the $G/H$-symmetric effective field theory can describe the boundary condition, then there must be $G/H$ symmetry breaking. There are a few situations where this might not be the case. For example, if the gap to the $H$ charged states goes to zero at the boundary, we will find some gapless edge modes. On the other hand, there could be $H$ symmetry breaking or $H$-enriched topological order at the boundary.

\section*{Acknowledgements}
We thank Amos Yarom for useful discussions and correspondences, and Ruben Verresen for discussion and collaboration on related work. 
The work of YW is  supported in part by the Center for Mathematical Sciences and Applications and the Center for the Fundamental Laws of Nature at Harvard University.

\bibliographystyle{JHEP}
\bibliography{refs}

\end{document}